\newcommand{\mybinom}[2]{%
    \mleft(
    \begin{array}{@{}c@{\,}} #1\\#2 \end{array}
    \mright)}
\newcommand{\reals}{{{\mathbb{R}}}}
\newcommand{\OPT}{{{{\mathrm{OPT}}}}}
\newcommand{\Ev}{{\mathit{Ev}}}
\newcommand{\hit}{{\mathit{hit}}}
\newcommand{\poly}{{\mathrm{poly}}}
\newcommand{\argmax}{{{{\mathrm{argmax}}}}}
\newcommand{\np}{{\mathrm{NP}}}
\newcommand{\fpt}{{\mathrm{FPT}}}
\newcommand{\wone}{{\mathrm{W[1]}}}
\newcommand{\p}{{\mathrm{P}}}
\newcommand{\integers}{{{\mathbb{Z}}}}
\newcommand{\calC}{{{\mathcal{C}}}}
\newcommand{\calA}{{{\mathcal{A}}}}
\newtheorem{example}{Example}
\newtheorem{theorem}{Theorem}
\newtheorem{corollary}{Corollary}
\newtheorem{problem}{Problem}
\newtheorem{definition}{Definition}
\newtheorem{proposition}{Proposition}
\newtheorem{lemma}{Lemma}
\newenvironment{proof-sketch}{\noindent\textit{Proof sketch.}\quad}{\hfill\qed\medskip}
\begin{document}
%
\title{FPT Approximation Schemes for Maximizing Submodular Functions\footnote{The preliminary version of this paper was presented at the 12th Conference on Web and Internet Economics (WINE-2016).}}
\date{}

\author{
Piotr Skowron\\
       {Technische Universit\"at Berlin}\\
       {Berlin, Germany}\\
}

\maketitle

\begin{abstract}
We investigate the existence of approximation algorithms for maximization of submodular functions, that run in a fixed parameter tractable (FPT) time. Given a non-decreasing submodular set function $v\colon 2^X \to \reals$ the goal is to select a subset $S$ of $K$ elements from $X$ such that $v(S)$ is maximized.  We identify three properties of set functions, referred to as $p$-separability properties, and we argue that many real-life problems can be expressed as maximization of submodular, $p$-separable functions, with low values of the parameter $p$. We present FPT approximation schemes for the minimization and maximization variants of the problem, for several parameters that depend on characteristics of the optimized set function, such as $p$ and $K$.
\end{abstract}

\section{Introduction}

We study (exponential-time) approximation algorithms for maximizing non-decreasing submodular set functions. A set function $v\colon 2^X \to \reals$ is submodular if for each two subsets $A \subseteq B \subset X$ and each element $x \in X \setminus B$ it holds that $v(A \cup \{x\}) - v(A) \geq v(B \cup \{x\}) - v(B)$; $v$ is non-decreasing if for each two subsets $A \subseteq B \subset X$ it holds that $v(A) \leq v(B)$. We consider the problem where the goal is to select a subset $S$ of $K$ elements from $X$ such that the value $v(S)$ is maximal.

Maximization of non-decreasing submodular functions is a very general problem that is extensively used in various research areas, from recommendation systems~\cite{budgetSocialChoice, owaWinner}, through voting theory~\cite{sko-fal-sli:j:multiwinner, budgetSocialChoice}, image engineering~\cite{jegelkaBlimes2011, gunhee-iccv-11, NIPS2013_4988}, information retrieval~\cite{conf/nips/YueG11, Lin:2010:MSV:1857999.1858133}, network design~\cite{KrauseGuestrin:uai05infogain, Krause:2008:NSP:1390681.1390689}, clustering~\cite{NIPS2005_2760}, speech recognition~\cite{conf/interspeech/LinB09}, to sparse methods~\cite{Structuredsparsity:2011vl, ICML2011Das_542}. Algorithms for maximization of non-decreasing submodular functions are applicable to other general problems of fundamental significance, such as the \textsc{MaxCover} problem~\cite{chvatal1979setcover, conf/aaai/SkowronF15}. The universal relevance of the problem implies that the existence of good (approximation) algorithms for it is highly desired.

Indeed, the problem has already received a considerable amount of attention in the scientific community. For instance, it is known that the problem is $\np$-hard (on the other hand, Iwata~et~al.~\cite{Iwata:2001:CSP:502090.502096} have shown that minimization of submodular functions is solvable in polynomial time) and that
the greedy algorithm, i.e., the algorithm that starts with the empty set and in each of $K$ consecutive steps adds to the partial solution such an element from $X$ that increases the value of the optimized function most, is an $(1 - \nicefrac{1}{e})$-approximation algorithm for maximization of non-decreasing submodular functions~\cite{submodular}. The same approximation ratio can be achieved for the distributed~\cite{Kumar:2013:FGA:2486159.2486168} and online~\cite{conf/nips/StreeterG08} variants of the problem. Algorithms for maximizing non-monotone submodular functions have been studied by Feige~et~al.~\cite{journals/siamcomp/FeigeMV11}, and the approximability of the problem with additional constraints has been investigated by Calinescu~et~al.~\cite{Calinescu:2007:MSS:1419497.1419517}, Sviridenko~\cite{journals/orl/Sviridenko04}, Lee~et~al.~\cite{Lee:2009:NSM:1536414.1536459}, and Vondr\'{a}k~et~al.~\cite{Vondrak:2011:SFM:1993636.1993740}. For the survey on maximization of submodular functions we refer the reader to the work of Krause and Golovin~\cite{submodularOverview}.

Unfortunately, the approximation guarantees of the greedy algorithm cannot be improved without compromising the efficiency of computation (there are some notable exceptions when the submodular function has a specific structure, for instance when it has low curvature~\cite{journals/dam/ConfortiC84,Sviridenko:2015}). Indeed, the \textsc{MaxCover} problem can be expressed as maximization of a non-decreasing submodular function, yet it is known that under standard complexity assumptions no polynomial-time algorithm can approximate it better than with ratio $(1 - \nicefrac{1}{e})$~\cite{fei:j:cover}. Motivated by this fact, and provoked by the desire to obtain better approximation guarantees, we turn our attention to algorithms that run in super-polynomial time. In our studies we follow the approach of parameterized complexity theory and look for algorithms that run in fixed parameter tractable time (in FPT time), for some natural parameters. To the best of our knowledge, FPT approximation of optimizing submodular functions has not been considered in the literature before.

Parameterized complexity theory aims at investigating how the complexity of a problem depends on the size of different parts of input instances, called parameters. An algorithm runs in FPT time for a parameter $P$ if it solves each instance $I$ of the problem in time $O(f(|P|)\cdot\mathrm{poly}(|I|))$, where $f$ is a computable function. This definition excludes a large class of algorithms, such as the ones with complexity $O(|I|^{|P|})$. From the point of view of parameterized complexity, FPT is seen as the class of easy problems. Intuitively, the complexity of an FPT algorithm consists of two parts: $f(|P|)$, which is relatively low for small values of the parameter, and $\mathrm{poly}(|I|)$ which is relatively low even for larger instances, because of polynomial relation between the computation time and the size of an instance. For details on parameterized complexity theory, we point the reader to appropriate overviews~\cite{dow-fel:b:parameterized,nie:b:invitation-fpt,flu-gro:b:parameterized-complexity,books/sp/CyganFKLMPPS15}.

When referring to running times of the algorithms we assume that the size of the input is $|X|$, and we count each access to the submodular function as an elementary operation (i.e., we assume that the access to the submodular function is given by an oracle). For example, when we say that an algorithm for maximizing submodular functions runs in polynomial-time, then in particular this means that the number of accesses of this algorithm to the submodular functions is of the order of $O(|X|^c)$ for some constant $c$. 

We identify several parameters that we believe are suitable for the analysis of the complexity of the maximization of non-decreasing submodular functions. Perhaps the most natural parameter to consider is the required size of solutions, $K$. Our other parameters depend on characteristics of the optimized set function. Specifically, we define a new property of set functions, called $p$-separability, and provide evidence that $p$ is a natural parameter to consider. We do that in Section~\ref{sec:applications}, by presenting several examples of real-life computational problems that can be expressed as maximization of submodular $p$-separable set functions, where the value of $p$ is small.

Our main contribution is a presentation and an analysis of a few algorithms for the problem. We construct fixed parameter tractable approximation schemes, i.e., collections of algorithms that run in FPT time and that can achieve arbitrarily good approximation ratios. We provide algorithms for two variants of the problem: in the first variant, referred to as the \emph{maximization variant}, the goal is to maximize the value $v(S)$. In the second one, referred to as the \emph{minimization variant}, the goal is to minimize $\big(v(X) - v(S)\big)$. While these two variants of the problem have the same optimal solutions, they are not equivalent in terms of their approximability. Indeed, if there exists a solution $S$ with objectively high value, i.e., if $v(S)$ is close to $v(X)$, then an approximation algorithm for the minimization variant of the problem will be usually superior. For instance, if there exists a solution $S$ such that $v(S) = 0.95 \cdot v(X)$, then a $2$-approximation algorithm for the minimization variant of the problem is guaranteed to return a solution with the value better than $0.9 \cdot v(X)$. On the other hand, a $\nicefrac{1}{2}$-approximation algorithm for the maximization variant of the problem may return in such a case a solution with value $0.475 \cdot v(X)$. Conversely, if the value of an optimal solution is significantly lower than the value of the whole set $X$, then a good approximation algorithm for the maximization variant of the problem will produce solutions of a better quality.

Our algorithms run in FPT time for the parameter $(K, p)$, where $K$ is the size of the solution, and $p$ is the lowest value such that the set function is $p$-separable. To address the case of functions which are not $p$-separable for any reasonable values $p$, we define a weaker form of approximability, referred to as approximation of the \emph{minimization-or-maximization variant}---here, the goal is to find a subset $S$ that is good in one of the previous two metrics. Such algorithms are also desired as they are guaranteed to find good approximation solutions, provided high quality solutions exist (i.e., if values of the optimal solutions are close to $v(X)$). We show that there exists a randomized FPT approximation scheme for minimization-or-maximization variant of the problem for the parameter $\left(K, \frac{\sum_{x \in X}v(\{x\})}{v(X)}\right)$.

We believe that the consequences of our general results are quite significant. In particular, in Section~\ref{sec:applications}, we prove the existence of FPT approximation schemes for some natural problems in the computational social choice, in the matching theory, and in the theoretical computer science.

\section{Notation and Definitions}\label{sec:prelims}

Let $X$ denote the universe set. We consider a set function $v: 2^X \to \reals$ that is non-negative, i.e., such that for each $S \subseteq X$ we have $v(S) \geq 0$. We recall that a set function $v$ is submodular if for each two subsets $A \subseteq B \subset X$ and each element $x \in X \setminus B$ it holds that $v(A \cup \{x\}) - v(A) \geq v(B \cup \{x\}) - v(B)$. There are numerous equivalent conditions characterizing submodular functions---for a survey we refer the reader to the seminal article of Nemhauser~et.~al.~\cite{submodular}. It is easy to see that if the set function $v$ is non-decreasing and submodular, then for each two subsets $A \subseteq B \subset X$ and each element $x \in X$ it holds that $v(A \cup \{x\}) - v(A) \geq v(B \cup \{x\}) - v(B)$ (here, we do not have to assume that $x \in X \setminus B$).

Below, we define a new class of properties of set functions that we call $p$-separability.

\begin{definition}[$p$-separable set function]\label{def:separability}
A submodular set function $v: 2^X \to \reals$ is: 
\begin{enumerate}
\item \emph{$p$-superseparable}, if for each $S \subseteq X$ we have:
\begin{align}\label{eq:pSeparableCondition}
\sum_{x \in X} \big(v(S \cup \{x\}) - v(S)\big) \geq \sum_{x \in X}v(\{x\}) -  p \cdot v(S) \text{,}
\end{align}
\item \emph{$p$-subseparable}, if for each $S \subseteq X$ we have:
\begin{align}\label{eq:pSeparableCondition3}
\sum_{x \in X} \big(v(S \cup \{x\}) - v(S)\big) \leq p \cdot v(X) -  p \cdot v(S) \text{.}
\end{align}
\item \emph{rev-$p$-subseparable}, if for each $S \subseteq X$ we have:
\begin{align}\label{eq:pSeparableCondition2}
\sum_{x \in X} \big(v(S \cup \{x\}) - v(S)\big) \geq p \cdot v(X) -  p \cdot v(S) \text{.}
\end{align}
\end{enumerate}
\end{definition}

We will show that for $p$-superseparable and $p$-subseparable functions, a smaller value of the parameter $p$ makes the problem easier to approximate. In contrast, for rev-$p$-subseparability, a larger value of $p$ is better.\footnote{In fact, for all our results to hold, in Definition~\ref{def:separability} we do not have to consider all the subsets $S$ of $X$ but only those with the size at most equal to $K$.} 

Each of the three conditions in Definition~\ref{def:separability} imposes some bound on the expression $\sum_{x \in X} \big(v(S \cup \{x\}) - v(S)\big)$. This expression can be seen as the sum of marginal gains of function $v$ at point $S$. If we write $|\nabla v(S)| = \sum_{x \in X} \big(v(S \cup \{x\}) - v(S)\big)$, and additionally use a common assumption that $v(\emptyset) = 0$ then the condition for the $p$-superseparability can be rewritten as $\frac{|\nabla v(\emptyset)| - |\nabla v(S)|}{v(S)} \leq p$. Thus, $p$-superseparability imposes a specific constraint on how the norm of the gradient of $v$ changes relatively to the value of $v$. Similarly, in this language the condition for the $p$-subseparability can be rewritten as $\frac{|\nabla v(S)|}{v(X) - v(S)} \leq p$.

In particular, observe that a nonincreasing and submodular function $v$ is $p$-superseparable if $\sum_{x \in X}v(\{x\}) \leq  p \cdot \min_{x\colon v(\{x\})>0} v(\{x\})$. Indeed, fix $S \subseteq X$ and consider two cases. If $v(S) = 0$ then Inequality~\eqref{eq:pSeparableCondition} is clearly satisfied as function $v$ is nondecreasing. If $v(S) > 0$, then by submodularity of $v$, there exists $y \in S$ such that $v(\{y\}) > 0$. Thus, by monotonicity of $v$ we have that $p \cdot v(S) \geq p \cdot \min_{x\colon v(\{x\})>0} v(\{x\})$, and so by our assumption $p \cdot v(S) \geq \sum_{x \in X}v(\{x\}$.
In this case, the right-hand side of Inequality~\eqref{eq:pSeparableCondition} is negative or equal to zero. Since function $v$ is nonincreasing we infer that the left-hand side of Inequality~\eqref{eq:pSeparableCondition} is non-negative, and consequently that the Inequality~\eqref{eq:pSeparableCondition} holds. Also, it is easy to see that each monotone and submodular function is $|X|$-superseparable and $|X|$-subseparable. Yet, in Section~\ref{sec:applications} we show that the value of parameter $p$ in many natural problems is significantly lower than $|X|$. 

Further, observe that a linear combination with positive coefficients of $p$-superseparable functions is $p$-superseparable. The same comment applies to $p$-subseparability and rev-$p$-subseparability. As we will see in Section~\ref{sec:applications}, this observation is very helpful in proving that certain set functions are $p$-separable.

Example~\ref{ex:maxCover} below illustrates how a variant of \textsc{MaxCover}, a fundamental problem in the theoretical computer science, can be expressed as a maximization of a non-negative, nondecreasing, $p$-separable, submodular function.

\begin{example}\label{ex:maxCover}
In the \textsc{MaxWeightCover} problem, we are given a set of $n$ elements, $N = \{e_1, e_2, \ldots e_n\}$, and a collection of $m$ subsets of $N$, $X = \{S_1, \ldots, S_m\}$. Each element $e_i$ has a weight $w_i$, and the goal is to find a subcollection of $K$ subsets from $X$, $\calC$, which maximizes the total weight of the covered elements: $\sum_{i: i \in \bigcup \calC}w_i$.
A \emph{frequency} of an element $e_i$ is defined as the number of sets that contain $e_i$. 

We will show that the \textsc{MaxWeightCover} problem with the frequency of elements upper-bounded by $p$ can be expressed as the maximization of a non-negative, nondecreasing submodular function which is (i) $p$-superseparable, and (ii) $p$-subseparable. For each set $\calC \subseteq X$ we define $v(\calC)$ as the total weight of elements covered by the sets from $\calC$. Such defined $v$ is non-negative and submodular.
Since the weighted sum of $p$-separable set functions is also $p$-separable, it is sufficient to show $p$-separability of a function $u_i$ which returns $1$ for collections of sets that cover $e_i$, and $0$ for the remaining ones. Since the frequency of the elements is bounded by $p$, then $\sum_{S \in X} u_i(\{S\})$, which is the number of sets that cover $e_i$, is also bounded by $p$. Further, $p\cdot \min_{S\colon u_i(\{S\})>0} u_i(\{S\}) = p$, thus the right-hand side of Inequality~\eqref{eq:pSeparableCondition} is negative or equal to zero: this shows that $u_i$ is $p$-superseparable. 
To show that $u_i$ is also $p$-subseparable, let us fix a collection of sets $\calC \subseteq X$ and let us consider two cases. If $e_i$ is covered by $\calC$, then $\sum_{S \in X} \big(u_i(\calC \cup \{S\}) - u_i(\calC)\big)$ is equal to 0. The right-hand side of Inequality~\eqref{eq:pSeparableCondition3} is non-negative, thus the condition for $p$-subseparability holds. If $e_i$ is not covered by $\calC$, then $\sum_{S \in X} \big(u_i(\calC \cup \{S\}) - u_i(\calC)\big)$ is equal to the number of sets that cover $e_i$, thus it is upper bounded by $p$. The right-hand side of Inequality~\eqref{eq:pSeparableCondition3} is equal to $p \cdot v(X) -  p \cdot v(\calC) = p \cdot 1 - p \cdot 0 = p$. Thus the condition for $p$-subseparability also holds.

Similarly, it can be shown that the \textsc{MaxWeightCover} problem with the frequency of elements lower-bounded by $p$ can be expressed as the maximization of a non-negative, nondecreasing submodular function which is rev-$p$-subseparable.
\end{example}

For a better intuition on the the above definitions, we refer the reader to Section~\ref{sec:applications} where we present several further examples of natural problems which can be expressed as optimization of super/sub-separable functions for low values of the parameter $p$ or rev-subseparable functions for high values of $p$.

A different parameter of submodular functions that has been studied in the literature is the curvature. We say that a function $v$ has curvature $c$, $0 \leq c \leq 1$, if for each $S \subset X$ and \mbox{$x \in X \setminus S$} it holds that \mbox{$v(S \cup \{x\}) - v(S) \geq (1 - c)v(\{x\})$}. It is known that the greedy algorithm achieves approximation ratio of $\frac{1 - e^{-c}}{c}$ for the problem of maximizing a submodular function with curvature $c$~\cite{journals/dam/ConfortiC84}. Recently, Sviridenko~et~al.~\cite{Sviridenko:2015} have improved this result by designing a $(1 - \nicefrac{c}{e})$-approximation algorithm for the problem.
The curvature is not comparable to $p$-separability: there are $p$-separable functions for very good values of the parameter $p$, which at the same time have curvature equal to one. For instance, consider the submodular function $v$ for \textsc{MaxWeightCover} from Example~\ref{ex:maxCover}. If the set of elements $N$ can be covered with a subcollection $\calC$, $\calC \neq X$, then for $S \in X \setminus \calC$ it holds that $v(\{S\}) > 0$ and $v(\calC \cup \{S\}) - v(\calC) = v(X) - v(X) = 0$, so the optimized function has curvature equal to one.

In this paper we investigate the problem of selecting $K$ elements from $X$ that altogether maximize the value of the set function $v$.

\begin{problem}[\textsc{BestKSubset}]
For a set of elements $X$, a polynomially computable set function $v: 2^X \to \reals$, and an integer $K$, the solution to the \textsc{BestKSubset} problem is a set $S \subseteq X$ with $|S| \leq K$, for which $v(S)$ is maximal.
\end{problem}

We are specifically interested in finding approximation algorithms for the \textsc{BestKSubset} problem. We focus on approximating two metrics: (i) the value $v(S)$ in the maximization variant of the problem, and (ii) the value $\big(v(X) - v(S)\big)$ in its minimization variant.

\begin{definition}[Approximation algorithms]\label{def:approximation}
Let $S^*$ denote an optimal solution for \textsc{BestKSubset}:
\begin{enumerate}
\item
Fix $\alpha$, $0 < \alpha < 1$. $\calA$ is an \emph{$\alpha$-approximation algorithm for the maximization variant} of \textsc{BestKSubset}, if for each instance $I$ of \textsc{BestKSubset} it returns a set $S$ such that $v(S) \geq \alpha v(S^*)$.
\item
Fix $\alpha$, $\alpha > 1$. $\calA$ is an \emph{$\alpha$-approximation algorithm for the minimization variant} of \textsc{BestKSubset}, if for each instance $I$ of \textsc{BestKSubset} it returns a set $S$ such that $\big(v(X) - v(S)\big) \leq \alpha \big(v(X) - v(S^*)\big)$.
\item
Fix $\alpha$, $\alpha > 1$. $\calA$ is an \emph{$\alpha$-approximation algorithm for the minimization-or-maximization variant} of \textsc{BestKSubset}, if for each instance $I$ of \textsc{BestKSubset} it returns a set $S$ such that $v(S) \geq \frac{1}{\alpha} v(S^*)$ or $\big(v(X) - v(S)\big) \leq \alpha \big(v(X) - v(S^*)\big)$.
\end{enumerate}
\end{definition}

The definition of an approximation algorithm for minimization-or-maximization variant of \textsc{BestKSubset} requires some additional comment: this definition guarantees that the algorithm finds a good solution provided a high quality solution exists. In other words, if there exists an optimal solution $S^*$ such that the value $\big(v(X) - v(S^*)\big)$ is low compared to $v(S^*)$, then a good approximation solution for the minimization variant of the problem is also a good solution for its maximization variant. For some parameters we present good approximation algorithms for the minimization-or-maximization variant of \textsc{BestKSubset}, even though we do not have as good algorithms neither for the minimization nor maximization variants of the problem.


We are specifically interested in FPT approximation schemes. A collection of algorithms $\calA$ is an FPT approximation scheme for a parameter $P$, if for each constant $\alpha$ there exists an $\alpha$-approximation algorithm in $\calA$ that runs in an FPT time for the parameter $P$. 

\section{Algorithms for Maximizing $\mathbf{p}$-separable Submodular Functions}\label{sec:algorithms}

In this section we present our approximation algorithms for the three variants of the \textsc{BestKSubset} problem, formally stated in Definition~\ref{def:approximation}. Our methods are inspired by the algorithms of Skowron and Faliszewski~\cite{conf/aaai/SkowronF15} for the \textsc{MaxCover} problem. We extend these algorithms to be applicable to the problem of maximizing more general submodular functions.

We start by presenting an FPT approximation scheme for the maximization variant of \textsc{BestKSubset} for submodular $p$-superseparable set functions. The algorithm, formally defined as Algorithm~\ref{alg:submod}, gets as an input an instance of the problem and the required approximation ratio, $\alpha$. It proceeds in two steps: first, it restricts the universe set by selecting a certain number of elements from $X$ with the highest values of the set function $v$. Second, it takes the set $\mathcal{A}$ of elements that were selected in the first step, computes the value of the set function for all $K$-element subsets of $\mathcal{A}$, and returns a subset with the highest value.

Algorithm~\ref{alg:submod} is an FPT approximation scheme for the maximization variant of the problem for the parameter $(K, p)$. Before we prove this fact, however, we note that under standard complexity theoretic assumptions, there exists no FPT exact algorithm for the problem. There even exists no FPT exact algorithm for the parameter $K$ if $p$ is a constant. This follows from our observation in Example~\ref{ex:maxCover}, where we show that the \textsc{MaxCover} problem with frequencies bounded by $p$ can be expressed as maximization of a non-negative, non-decreasing, submodular, $p$-superseparable set function, and from the fact that the \textsc{MaxCover} problem with frequencies bounded by a constant, for the parameter $K$ belongs to the complexity class $\wone$~\cite{conf/aaai/SkowronF15}, and it is unlikely that $\wone \subseteq \fpt$. Further, even for $p = 2$ there exists no polynomial-time approximation scheme (PTAS) for the problem. This is because the \textsc{MaxCover} problem with the frequencies of the elements equal to two is a generalization of the \textsc{MaxVertexCover} problem, for which there is no PTAS unless $\p= \np$ (see, e.g., the work of Croce and Paschos~\cite{cro-pas:j:cover} for a discussion on the approximability of the \textsc{MaxVertexCover} problem). 

\begin{theorem}\label{thm:pSubmodularMaximization}
For each non-negative, non-decreasing, submodular, and $p$-superseparable set function $v: 2^X \to \reals$ and for each $0 \leq \alpha < 1$, Algorithm~\ref{alg:submod} outputs an $\alpha$-approximate solution for the maximization variant of \textsc{BestKSubset}, in time $\poly(n,m) \cdot \mybinom{\frac{p K}{(1 - \alpha)} + K}{K}$.
\end{theorem}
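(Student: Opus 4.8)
The plan is to fix $N=\left\lceil\tfrac{pK}{1-\beta}\right\rceil+K$, let $\mathcal{A}$ be the set of the $N$ elements of $X$ with the largest singleton values $v(\{x\})$ (this is the first step of Algorithm~\ref{alg:submod}), and argue that the best $K$-element subset of $\mathcal{A}$ — which the second step computes by exhaustive search — already has value at least $\beta\,v(S^*)$, where $S^*$ is an optimal solution. Since the algorithm returns the best such subset, it suffices to exhibit one particular $T\subseteq\mathcal{A}$ with $|T|=K$ and $v(T)\ge\beta\,v(S^*)$. The running time is then immediate: the first step is a sort costing $\poly(n,m)$, and the second evaluates $v$ on each of the $\binom{N}{K}=\binom{\lceil pK/(1-\beta)\rceil+K}{K}$ subsets, each evaluation costing $\poly(n,m)$ because $v$ is polynomially computable, which matches the claimed bound.

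The engine of the argument is the $p$-superseparability inequality~\eqref{eq:pSeparableCondition} instantiated at $S=S^*$. There every $x\in S^*$ contributes $0$ to the left-hand side, while each remaining marginal $v(S^*\cup\{x\})-v(S^*)$ is at most $v(\{x\})$ by submodularity and non-negativity; rearranging yields the key estimate $\sum_{x\in S^*}v(\{x\})\le p\,v(S^*)$, which keeps all later bounds expressed in terms of $v(S^*)$ rather than $v(X)$. Writing $S^*=S^*_{\mathrm{in}}\cup S^*_{\mathrm{out}}$ with $S^*_{\mathrm{in}}=S^*\cap\mathcal{A}$, I would take $T$ to consist of $S^*_{\mathrm{in}}$ together with $|S^*_{\mathrm{out}}|$ further elements of $\mathcal{A}\setminus S^*$, chosen greedily so as to maximise their marginal contribution. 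Submodularity then gives $v(S^*)-v(T)\le\sum_{x\in S^*_{\mathrm{out}}}v(\{x\})-\big(v(T)-v(S^*_{\mathrm{in}})\big)$, so the two quantities I must control are the total singleton mass of the discarded optimal elements and the marginal value recovered by the fillers.

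The discarded elements all lie outside the top $N$, hence each has singleton value at most $\min_{x\in\mathcal{A}}v(\{x\})$, and there are at most $K$ of them; averaging the superseparability lower bound for $\sum_{x\in\mathcal{A}}\big(v(S\cup\{x\})-v(S)\big)$ over the $\ge N-K$ available fillers shows that each greedy fill step, starting from a current partial set $S$, gains at least $\big(\sum_{x\in\mathcal{A}}v(\{x\})-p\,v(S)\big)/(N-K+1)$. The main obstacle, and the step I expect to be most delicate, is turning these additive estimates into a multiplicative $(1-\beta)$-loss: the superseparability bound carries a drift term $-p\,v(S)$, and the joint marginal of the fillers can fall short of the sum of their individual marginals because of submodular overlap. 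The role of the calibrated size $N\approx pK/(1-\beta)$ is precisely to make the averaging denominator large enough that, after using $\sum_{x\in S^*}v(\{x\})\le p\,v(S^*)$, the accumulated gain over the $\le K$ fill steps compensates for the at most $(1-\beta)\,v(S^*)$ worth of discarded singleton mass. Verifying that this recurrence closes — rather than merely yielding an additive error proportional to $v(X)$, or the weaker factor $1-1/e$ typical of a pure greedy analysis — will be the heart of the proof.
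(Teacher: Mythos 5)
Your overall architecture is the same as the paper's: take $\calA$ to be the $N=\lceil pK/(1-\beta)\rceil+K$ elements with the largest singleton values, keep $S^*\cap\calA$, trade the at most $K$ discarded elements of $S^*\setminus\calA$ for fillers from $\calA$, and let the exhaustive search over $K$-subsets of $\calA$ find the resulting set; the running-time accounting is identical. But the step you set aside as ``the heart of the proof''---that the net loss closes multiplicatively to $(1-\beta)v(S^*)$---is the entire mathematical content of the theorem, and the guidance you give for it points the wrong way. Your flagged ``key estimate'' $\sum_{x\in S^*}v(\{x\})\le p\,v(S^*)$ (superseparability instantiated at $S=S^*$) is never needed: the drift term $-p\,v(S)$ has to be controlled at the \emph{intermediate} hybrid sets $S$ built during the replacement process, and what makes it harmless there is simply that every such $S$ has at most $K$ elements, hence $v(S)\le\OPT$. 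Moreover, the discarded singleton mass by itself need not be at most $(1-\beta)v(S^*)$: for an additive function with all singleton values equal (which is $1$-superseparable), tie-breaking can make $\calA$ miss $S^*$ entirely, so the discarded mass equals $v(S^*)$; only the \emph{net} quantity---discards minus filler gains---obeys the bound. Finally, your per-step gain estimate pairs the numerator $\sum_{x\in\calA}v(\{x\})-p\,v(S)$ with the denominator $N-K+1$, whereas the number of available fillers can be as large as $N$; that slip is repairable, but it indicates the balancing you postponed was never actually checked.

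Here is the closing computation your plan is missing (it is the paper's proof). Replace the discarded elements one at a time; at a generic step let $S$ denote the current hybrid set with the discarded element $x_i$ removed and its replacement not yet added, so that $S\supseteq S^*\cap\calA$ and $|S|\le K$, whence $v(S)\le\OPT$. The superseparability condition rearranges to
\begin{align*}
\sum_{y\in X}\Bigl(v(S)+v(\{y\})-v(S\cup\{y\})\Bigr)\;\le\; p\,v(S)\;\le\; p\,\OPT\textrm{,}
\end{align*}
and every summand is non-negative (submodularity plus $v\ge 0$). Since at least $N-K$ elements $y\in\calA\setminus S$ are available as fillers, some available $y$ satisfies $v(S)+v(\{y\})-v(S\cup\{y\})\le p\,\OPT/(N-K)$. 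The loss of the swap is then
\begin{align*}
v(S\cup\{x_i\})-v(S\cup\{y\})\;\le\; v(\{x_i\})-\bigl(v(S\cup\{y\})-v(S)\bigr)\;\le\; v(\{y\})-\bigl(v(S\cup\{y\})-v(S)\bigr)\;\le\;\frac{p\,\OPT}{N-K}\;\le\;\frac{(1-\beta)\OPT}{K}\textrm{,}
\end{align*}
where the first inequality is submodularity with $v(\emptyset)\ge 0$, and the second uses $v(\{x_i\})\le v(\{y\})$ for $x_i\notin\calA$, $y\in\calA$---this, not the $S^*$ estimate, is where the top-$N$ choice of $\calA$ enters. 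Summing over the at most $K$ swaps bounds the total loss by $(1-\beta)\OPT$, so $\calA$ contains a $K$-subset of value at least $\beta\OPT$, which the exhaustive search finds. Without this (or an equivalent) argument, your proposal is a correct frame around a missing central inequality.
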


\SetKwInput{KwParameters}{Parameters}
\begin{algorithm}[t]
  \small
\KwParameters{\\$\hspace{3pt}$ $X$ --- the set of elements.\\
         $\hspace{3pt}$ $v$ --- the submodular function $v: 2^X \to \reals$ that is $p$-superseparable.\\
         $\hspace{3pt}$ $\alpha$ --- the required approximation ratio of the algorithm. \\}
  \vspace{2mm}
  \SetAlCapFnt{\small}
  $\mathcal{A} \leftarrow \lceil \frac{p K}{(1 - \alpha)} + K \rceil$ elements $x$ from $X$ with highest values $v(\{x\})$ \;
  \Return{$K$-element subset of $\mathcal{A}$ with the highest value of $v$} \;
  \caption{\small An algorithm for the \textsc{BestKSubset} problem for non-negative, non-decreasing, submodular, and $p$-superseparable set functions.}
  \label{alg:submod}
\end{algorithm}

\begin{proof}
Consider an input instance $I$ of the \textsc{BestKSubset} problem. Let $S$ and $S^{*}$ be, respectively, the solution returned by Algorithm~\ref{alg:submod} and some optimal solution. We set $\OPT = v(S^{*})$ as the value of an optimal solution.

We will show that $v(S) \geq \alpha\OPT$. Naturally, the value $v(S)$ might be lower than $v(S^{*})$. This might happen because $\calA$, the set of the elements considered by the algorithm in its second step, might not contain some elements from $S^*$. We will show that $\ell = |S^{*} \setminus \calA|$ elements from $S^{*} \setminus \calA$ might be replaced by some elements from  $\calA$ which are not present in $S^*$, in a way that decreases the value of $S^{*}$ by at most a small fraction. After such replacement, we will end up with the set containing the elements from $\calA$ only. From this we will infer that the value of the best solution in $\calA$ is lower than the value of an optimal solution by at most a small factor.

Let us order the elements from $S^{*} \setminus \calA$ in some arbitrary way, and let us use the notation $S^{*} \setminus \calA = \{x_1, \ldots, x_\ell\}$.
We will replace the elements $\{x_1, \ldots, x_\ell\}$ with the elements $\{x'_1, \ldots, x'_\ell\}$ from $\calA \setminus S^{*}$ (we will define these elements later), one by one, in $\ell$ consecutive steps.
Let $S_i$ denote the set that we get after replacing elements $x_1, \ldots, x_i$ with $x_1', \ldots, x_i'$, that is let $S_i = (S^{*} \setminus \{x_1, \dots, x_{i}\}) \cup \{x_1', \dots, x_{i}'\}$. In particular, $S_{\ell} \subseteq \calA$. Thus, in the $i$-th step we will replace $x_i$ with $x_i'$ in set $S_{i-1}$.

The elements $x'_1, \ldots, x'_\ell$ are defined by induction, in the following way. Assume that we have already found elements $x'_1, \ldots, x'_{i-1}$ (for $i=1$ it means we have not yet found any element, i.e., that we are looking for the first element in the sequence). We define $x'_i$ to be an element from $\calA \setminus S_{i-1}$ that maximizes $v\Big((S_{i-1} \setminus \{x_{i}\}) \cup \{x_i'\}\Big)$, that is  
when selecting $x'_i$ we aim at maximizing $v(S_i)$.

It may happen that after replacing $x_i$ with $x'_i$, the value of the function $v$ for the new set decreases. Let $\Delta_i$ denote the value of such decrease (or increase if the algorithm were lucky---in such case $\Delta_i$ would be negative):
\begin{align*}
\Delta_i = v(S_{i-1}) - v(S_i) \textrm{.}
\end{align*}
  By the construction of the set $\calA$ and the fact that $x_i \notin \calA$, for every $y \in \calA \setminus S_{i-1}$ we have that $v(\{x_i\}) \leq v(\{y\})$.
  By the way we choose the element $x_i'$, we know that for every $y \in \calA \setminus S_{i-1}$, we have:
\begin{align*}
\Delta_i \leq v(S_{i-1}) - v\Big((S_{i-1} \setminus \{x_{i}\}) \cup \{y\}\Big) \textrm{.}
\end{align*}
Using submodularity and after reformulation we get:
\begin{align*}
\Delta_i &\leq v\Big(S_{i-1} \setminus \{x_i\}\Big) + v\Big(\{x_i\}\Big) - v(\emptyset) - v\Big((S_{i-1} \setminus \{x_{i}\}) \cup \{y\}\Big) \\
         & \leq v\Big(S_{i-1} \setminus \{x_i\}\Big) - v\Big((S_{i-1} \setminus \{x_{i}\}) \cup \{y\}\Big) + v\Big(\{y\}\Big) - v(\emptyset) \textrm{.}
\end{align*}
For any $y \in X$ (in particular for $y \notin \calA \setminus S_{i-1}$), by submodularity and monotonicity, we have that:
\begin{align*}
0 \leq v\Big(S_{i-1} \setminus \{x_{i}\}\Big) + v\Big(\{y\}\Big) - v(\emptyset) - v\Big((S_{i-1} \setminus \{x_{i}\}) \cup \{y\}\Big) \textrm{.}
\end{align*}
Since the set function is non-negative, the inequalities above will still hold if we skip the fragment $v(\emptyset)$. Consequently, since the set function is $p$-superseparable, we get:
\begin{align*}
(|\mathcal{A}| - K)\Delta_i & \leq |\calA \setminus S_{i-1}|\Delta_i = \sum_{y \in \calA \setminus S_{i-1}} \Delta_i +  \sum_{y \in X \setminus (\calA \setminus S_{i-1})} 0 \\
                            & \leq \sum_{y \in X} \bigg( v\Big(S_{i-1} \setminus \{x_{i}\}\Big) + v\Big(\{y\}\Big) - v\Big((S_{i-1} \setminus \{x_{i}\}) \cup \{y\}\Big)\bigg) \\
                            & \leq p\cdot v\Big(S_{i-1} \setminus \{x_{i}\}\Big) \leq p\OPT \textrm{.}
\end{align*}
Which leads to:
\begin{align*}
\Delta_i \leq \frac{p\OPT}{|\mathcal{A}| - K} \leq \frac{\OPT p(1-\alpha)}{pK} = \frac{\OPT (1-\alpha)}{K} \textrm{.}
\end{align*}
Since $\ell \leq K$, we conclude that:
\begin{align*}
\sum_{i=1}^\ell \Delta_i \leq  (1 - \alpha)\OPT.
\end{align*}
That is, after replacing the elements from $S^*$ that do not appear in $\calA$ with sets from $\calA$, the optimal value decreases by at most $(1-\alpha)\OPT$. This means that there are $K$ elements in $\calA$ for which the function $v$ achieves the value equal to at least $\alpha\OPT$. Since the algorithm tries all size-$K$ subsets of $\calA$, it finds a solution with such a value.
\end{proof}

Now, we move to optimizing $p$-subseparable set functions. We know that the simple greedy algorithm (Algorithm~\ref{alg:greedy}) achieves approximation ratio of $1 - \nicefrac{1}{e}$ for the maximization variant of the \textsc{BestKSubset} problem~\cite{submodular}. Below, we show that if the optimized set function is rev-$p$-subseparable then for some values of the parameter $p$ the analysis of the approximation guarantees of this simple greedy algorithm can be further improved. We note that in this case we do not even require the set function to be submodular.

\begin{algorithm}[t]
    \small
    \SetAlCapFnt{\small}
    \KwParameters{\\$\hspace{3pt}$ $X$ --- the set of elements.\\
         $\hspace{3pt}$ $v$ --- the submodular function $v: 2^X \to \reals$ that is rev-$p$-subseparable.\\}
    \vspace{2mm}

    $S = \{\}$\;
    \For{$i\leftarrow 1$ \KwTo $K$}{
       $S \leftarrow S \cup \Big\{ \argmax_{x \in X}\Big(v(S \cup \{x\} - v(S))\Big) \Big\}$
    }
    \Return{C}
    \caption{\small An algorithm for the \textsc{BestKSubset} problem for non-negative, non-decreasing, rev-$p$-subseparable set functions.}
    \label{alg:greedy}
\end{algorithm}

\begin{proposition}\label{theorem:greedy}
The greedy algorithm (Algorithm~\ref{alg:greedy}) is a polynomial-time $\left(1 - e^{-\frac{pK}{|X|}}\right)$-approximation algorithm for the maximization variant of \textsc{BestKSubset} problem with a non-negative, non-decreasing, rev-$p$-subseparable set function.
\end{proposition}
\begin{proof}
The algorithm clearly runs in polynomial time and so we focus only on proving its approximation ratio.
  
We prove by induction that for each $i$, $0 \leq i \leq K$, after the $i$'th iteration of the greedy algorithm's ``for'' loop, the value $\Big(v(X) - v(S)\Big)$ is at most equal to $\Big(v(X) - v(\emptyset)\Big)\Big(1 - \frac{p}{|X|}\Big)^{i}$. Naturally, the assumption is true for $i=0$. Suppose that the inductive assumption holds for some $(i-1)$, $1 \leq i < K$. Let $S$ be the partial solution at the beginning of the $i$-th iteration of the algorithm's ``for'' loop  and let $x_b$ be the element added to the partial solution in this iteration. Since the set function $v$ is rev-$p$-subseparable, it holds that:
\begin{align*}
\sum_{x \in X} \Big(v(S \cup \{x\}) - v(S)\Big) \geq p \cdot v(X) -  p \cdot v(S) \text{.}
\end{align*}
From the pigeonhole principle we get that:
\begin{align*}
\Big(v(S \cup \{x_b\}) - v(S)\Big) \geq \frac{p}{|X|} \Big(v(X) - v(S)\Big) \text{.}
\end{align*}
Thus, we get that:
\begin{align*}
v(X) - v(S \cup \{x_b\}) & \leq v(X) - v(S) + v(S) - v(S \cup \{x_b\}) \\
                         & \leq v(X) - v(S) - \frac{p}{|X|} \Big(v(X) - v(S)\Big) \\
                         & = \Big(v(X) - v(S)\Big) \Big(1 - \frac{p}{|X|}\Big) \leq \Big(v(X) - v(\emptyset)\Big)\Big(1 - \frac{p}{|X|}\Big)^{i} \textrm{.}
\end{align*}
Let $C$ be the solution returned by Algorithm~\ref{alg:greedy}.
\begin{align*}
v(X) - v(C) \leq \Big(v(X) - v(\emptyset)\Big)\Big(1 - \frac{p}{|X|}\Big)^{\frac{|X|}{p}\cdot \frac{pK}{|X|}} \leq \Big(v(X) - v(\emptyset)\Big)e^{-\frac{pK}{|X|}} \textrm{.}
\end{align*}
Since the function $v$ is monotonic, and thus $\OPT \leq v(X)$, and since $v$ is non-negative, and thus $v(\emptyset) \geq 0$, we have that:
\begin{align*}
 v(C) \geq v(X) - \Big(v(X) - v(\emptyset)\Big)e^{-\frac{pK}{|X|}} \geq \OPT\Big(1 - e^{-\frac{pK}{|X|}}\Big) \textrm{.}
\end{align*}
This completes the proof.
\end{proof} 

If we additionally assume that the set function is submodular, then naturally, the standard approximation ratio of $1 - \nicefrac{1}{e}$ of the greedy algorithm for maximizing submodular functions still applies and we can strengthen approximation guarantee from Proposition~\ref{theorem:greedy} to $\Big(1 - e^{-\max\left(\frac{pK}{|X|},1\right)}\Big)$. Proposition~\ref{theorem:greedy} has interesting implication: if we restrict our problem to the class of instances for which $\frac{p}{|X|}$ is lower-bounded by some constant, then there exists a polynomial time approximation scheme (PTAS) for such a restricted problem. This observation is interesting, because for some real-life problems it is more natural to express the value of the parameter $p$ as the fraction of the size of the universe set $X$ rather than as the absolute value.

\begin{corollary}\label{cor:ptas}
Let $\gamma \in \reals$ be a constant. There exists a polynomial time approximation scheme for the maximization variant of the \textsc{BestKSubset} problem with non-negative, non-decreasing, and rev-$(\gamma|X|)$-subseparable set function.
\end{corollary}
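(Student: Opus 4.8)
The plan is to invoke Theorem~\ref{theorem:greedy} with the specific value $p = \gamma|X|$ (we may assume $\gamma > 0$, since otherwise the at-least-$(\gamma|X|)$-subseparability hypothesis is vacuous for monotone functions and the claim would be false). Substituting this value into the approximation guarantee, the greedy algorithm (Algorithm~\ref{alg:greedy}) becomes a $\big(1 - e^{-\gamma K}\big)$-approximation algorithm, because $\frac{pK}{|X|} = \frac{\gamma|X|K}{|X|} = \gamma K$. The crucial point is that this ratio no longer depends on the size of the universe $|X|$; it depends only on the constant $\gamma$ and on the solution size $K$. This already settles the case of large $K$: for any fixed target ratio, once $K$ is sufficiently large the greedy algorithm alone meets it.

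To make this precise, fix the desired approximation ratio $\beta$, $0 < \beta < 1$. I would introduce the threshold $K_0 = \big\lceil \frac{1}{\gamma}\ln\frac{1}{1-\beta}\big\rceil$, a constant depending only on $\gamma$ and $\beta$. A one-line calculation shows that $1 - e^{-\gamma K} \geq \beta$ holds exactly when $K \geq \frac{1}{\gamma}\ln\frac{1}{1-\beta}$. The scheme therefore branches on $K$: if $K \geq K_0$, the algorithm simply runs Algorithm~\ref{alg:greedy} and returns its output, which by the preceding paragraph is a $\beta$-approximate solution and is computed in polynomial time.

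The only remaining case is $K < K_0$, which I would handle by brute force: enumerate every $K$-element subset of $X$, evaluate $v$ on each, and return the one of maximum value. Since $K < K_0$ and $K_0$ is a constant independent of the input, the number of subsets is $\binom{|X|}{K} = O\big(|X|^{K_0}\big)$, which is polynomial in $|X|$. Each subset is evaluated in polynomial time because $v$ is polynomially computable, so this branch runs in polynomial time and trivially returns an optimal, hence $\beta$-approximate, solution.

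I do not expect any genuine obstacle beyond arranging the case split cleanly. The essential observation is that the substitution $p = \gamma|X|$ eliminates the $|X|$-dependence from the greedy guarantee of Theorem~\ref{theorem:greedy}, leaving only a dependence on $K$ that is either large enough for greedy to reach the target ratio or small enough to be removed by exhaustive enumeration in polynomial time. Combining the two branches produces, for every fixed $\beta < 1$, a polynomial-time $\beta$-approximation algorithm, which is precisely a PTAS.
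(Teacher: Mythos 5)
Your proposal is correct and follows essentially the same route as the paper: apply Theorem~\ref{theorem:greedy} with $p = \gamma|X|$ so that the greedy guarantee $1 - e^{-\gamma K}$ depends only on $\gamma$ and $K$, run the greedy algorithm when $K$ exceeds a constant threshold depending on $\gamma$ and the target ratio, and fall back to brute-force enumeration of all $K$-element subsets (polynomially many, since $K$ is then bounded by a constant) otherwise. Your version merely makes the threshold $K_0 = \big\lceil \frac{1}{\gamma}\ln\frac{1}{1-\beta}\big\rceil$ explicit, which the paper leaves implicit.
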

\begin{proof}
For each $\epsilon$, $0 < \epsilon < 1$, there exists such a constant $c$ that for $K > c$ we have $\big(1 - e^{-\gamma K}\big) \geq 1 - \epsilon$. For $K > c$ we run the greedy algorithm (and the approximation ratio follows from Proposition~\ref{theorem:greedy}), and for $K \leq c$, we invoke a brute-force algorithm that tries all $K$-element subsets of $X$.
\end{proof}

Finally, we consider the minimization variant of \textsc{BestKSubset} for the case of \emph{$p$-subseparable} submodular set functions. In Algorithm~\ref{alg:pSeparableMin} we present a randomized algorithm for the problem: the algorithm performs several independent runs. Each run, in Algorithm~\ref{alg:pSeparableMin} described by the \texttt{SingleRun} procedure, builds the solution by selecting random elements in $K$ consecutive steps. In each step, an element $x$ is selected with the probability proportional to the marginal increase of the value of the set function caused by adding $x$ to the partial solution. Theorem~\ref{thm:pSeparableMin} below shows that if we repeat such a procedure a sufficient number of times, we are very likely to find a solution with the required approximation ratio. Before we state Theorem~\ref{thm:pSeparableMin} we will prove a more technical lemma which will appear useful in proving the theorem, but also in our further analysis. 

\begin{algorithm}[t!]
   \small
   \SetKwInput{KwParameters}{Parameters}
   \SetKwFunction{SingleRun}{SingleRun}
   \SetKwFunction{Main}{Main}
   \SetKwBlock{Block}
   \SetAlCapFnt{\footnotesize}
   \KwParameters{\\
          $\hspace{3pt}$ $X$ --- the set of elements.\\
          $\hspace{3pt}$ $v$ --- the submodular function $v: 2^X \to \reals$ that is $p$-subseparable.\\

          $\hspace{3pt}$ $\alpha$ --- the required approximation ratio of the algorithm \\
          $\hspace{3pt}$ $\epsilon$ --- the allowed probability of achieving worse than $\alpha$ approximation ratio}
\vspace{1mm}	
    \SingleRun{}:
	\Block{
                $S \leftarrow \emptyset$\; 
		\For{$i \leftarrow 0$ \KwTo $K$}
		{
			$x_r \leftarrow$ randomly select an element from $X \setminus S$\\
                        \hspace{0.8cm} with probability of selecting $x$ proportional to $v(S \cup \{x\}) - v(S)$  \;
			$S \leftarrow S \cup \{x_r\}$\;
		}
		\Return{$S$}\;
	}
	\vspace{-2mm}\hspace{2mm} \\
	\Main{}:
        run \SingleRun{} for $\left\lceil (-\ln \epsilon) / \left(\frac{\alpha - 1}{p\alpha}\right)^K \right\rceil$ times; return the best solution\;

     \caption{\small An algorithm for the minimization variant of the \textsc{BestKSubset} problem with a non-negative, non-decreasing, submodular, and $p$-subseparable set function.}
   \label{alg:pSeparableMin}
\end{algorithm}

\begin{lemma}\label{lem:algorithm2}
Consider a single iteration of the ``for'' loop within the function \texttt{SingleRun} in Algorithm~\ref{alg:pSeparableMin}. Let $S^*$ be some optimal solution for $I$, let $S$ be the value of the partial solution at the beginning of this iteration and let $p_\hit$ denote the probability that in this iteration any element from $S^*$ is added to the partial solution (thus, using notation from Algorithm~\ref{alg:pSeparableMin}, $p_\hit$ is the probability that $x_r \in S^*$). We have that:
\begin{align*}
p_{\hit} \geq \frac{v(S^*) - v(S)}{\sum_{x \in X}\Big(v(S \cup \{x\}) - v(S)\Big)} \text{.}
\end{align*}
\end{lemma}
\begin{proof}
\begin{align*}
p_{\hit} & = \frac{\sum_{x \in S^*}\Big(v(S \cup \{x\}) - v(S)\Big)}{\sum_{x \in X}\Big(v(S \cup \{x\}) - v(S)\Big)} & \\
             & \geq \frac{\Big(v(S \cup \{x_1\}) - v(S)\Big) + \Big(v(S \cup \{x_1, x_2\}) - v(S \cup \{x_1\})\Big) + \ldots}{\sum_{x \in X}\Big(v(S \cup \{x\}) - v(S)\Big)} & \text{submodularity} \\
             & = \frac{v(S \cup S^*) - v(S)}{\sum_{x \in X}\Big(v(S \cup \{x\}) - v(S)\Big)}  & \\
             & \geq \frac{v(S^*) - v(S)}{\sum_{x \in X}\Big(v(S \cup \{x\}) - v(S)\Big)} & \text{non-decreasing}
\end{align*}
\end{proof}

\begin{theorem}\label{thm:pSeparableMin}
For each non-negative, non-decreasing, submodular, $p$-subseparable set function $v: 2^X \to \reals$ and for each $0 \leq \alpha < 1$, Algorithm~\ref{alg:pSeparableMin} outputs an $\alpha$-approximate solution for the minimization variant of \textsc{BestKSubset}, with probability $(1 - \epsilon)$. The time complexity of the algorithm is $\poly(n,m) \cdot\left\lceil (-\ln \epsilon) / \left(\frac{\alpha - 1}{p\alpha}\right)^K\right\rceil$.
\end{theorem}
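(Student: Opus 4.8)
The plan is to analyze the success probability of a single invocation of \texttt{SingleRun} and then amplify it by independent repetition. Write $r^{*} = v(X) - v(S^{*})$ for the optimal residual in the minimization metric, and for a partial solution $S$ write $R(S) = v(X) - v(S)$; call a run \emph{successful} if the returned set $S$ satisfies $R(S) \le \beta r^{*}$ (here $\beta > 1$ is the minimization ratio of Definition~\ref{def:approximation}). If I can show that a single run is successful with probability at least $q := \big(\tfrac{\beta-1}{p\beta}\big)^{K}$, then, since the $t = \lceil -\ln\epsilon / q\rceil$ runs are independent, the probability that all of them fail is at most $(1-q)^{t} \le e^{-qt} \le e^{\ln\epsilon} = \epsilon$; returning the best solution found therefore gives a $\beta$-approximation with probability at least $1-\epsilon$. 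Each run performs $K$ selections, and each selection only needs the $|X|$ marginals $v(S\cup\{x\})-v(S)$, so the total running time is $\poly(n,m)\cdot t$, matching the claimed bound.

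The heart of the argument is a per-step estimate. Assuming without loss of generality (by monotonicity) that $\|S^{*}\| = K$, fix a step whose current partial solution is $S$ with $R(S) > \beta r^{*}$, i.e.\ we have not yet succeeded. The next element is drawn from $X\setminus S$ with probability proportional to its marginal value, so the probability of selecting some element of $S^{*}\setminus S$ equals
\begin{align*}
\frac{\sum_{x \in S^{*}\setminus S}\big(v(S\cup\{x\})-v(S)\big)}{\sum_{y \in X\setminus S}\big(v(S\cup\{y\})-v(S)\big)}\textrm{.}
\end{align*}
For the numerator, monotonicity gives $v(S^{*})\le v(S\cup S^{*})$, and telescoping submodularity along an arbitrary ordering of $S^{*}\setminus S$ gives $v(S\cup S^{*})-v(S)\le \sum_{x\in S^{*}\setminus S}\big(v(S\cup\{x\})-v(S)\big)$; hence the numerator is at least $v(S^{*})-v(S) = R(S)-r^{*}$. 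For the denominator, at-most-$p$-subseparability applied to $S$ (using that elements of $S$ have zero marginal) gives $\sum_{y\in X\setminus S}\big(v(S\cup\{y\})-v(S)\big)\le p\,(v(X)-v(S)) = p\,R(S)$. Combining, the probability of picking an element of $S^{*}\setminus S$ is at least $\frac{R(S)-r^{*}}{p\,R(S)} = \frac1p\big(1-\tfrac{r^{*}}{R(S)}\big)$, which, since $R(S) > \beta r^{*}$, exceeds $\frac1p\big(1-\tfrac1\beta\big) = \frac{\beta-1}{p\beta}$.

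With this estimate I would introduce the event $E$ that, \emph{at every one of the $K$ steps}, either the residual at the start of the step is already $\le \beta r^{*}$ or the element drawn lies in $S^{*}\setminus S$. Conditioned on any history, the good outcome at a step has probability at least $\frac{\beta-1}{p\beta}$: it is $1$ once the residual is small enough, and otherwise it follows from the per-step estimate — note that while $R(S) > \beta r^{*}\ge 0$ the set $S^{*}\setminus S$ is nonempty, since $S^{*}\subseteq S$ would force $R(S)\le r^{*}\le\beta r^{*}$. Therefore $\Pr[E]\ge q$. Finally, $E$ implies success: if the residual ever drops to $\le \beta r^{*}$ it stays there by monotonicity, and otherwise the run adds a fresh element of $S^{*}$ at each of the $K$ steps, so after $K$ steps $S = S^{*}$ and $R(S) = r^{*}\le\beta r^{*}$. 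Hence a single run succeeds with probability at least $q$.

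The main obstacle is identifying the right ``good event'' to track. The clean two-sided use of the hypotheses — submodularity and monotonicity to \emph{lower}-bound the total marginal mass of $S^{*}\setminus S$ in the numerator, and at-most-$p$-subseparability to \emph{upper}-bound the normalizing constant in the denominator — is exactly what produces the factor $\frac{\beta-1}{p\beta}$. Some care is then needed to check that this per-step bound holds uniformly over all histories (including the degenerate case $r^{*}=0$ and the steps occurring after success has already been reached), so that the conditional probabilities multiply to the claimed $q$.
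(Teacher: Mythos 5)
Your proposal is correct and follows essentially the same route as the paper's proof: the same per-step bound of $\frac{\beta-1}{p\beta}$ on the probability of hitting $S^{*}$ (submodularity and monotonicity to lower-bound the numerator, at-most-$p$-subseparability to upper-bound the denominator), raised to the $K$-th power and then amplified by $\lceil -\ln\epsilon/q\rceil$ independent runs. Your per-step ``good event'' bookkeeping is in fact a slightly more careful rendering of the paper's conditioning on the event $\Ev$, but it is the same argument, not a different one.
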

\begin{proof}
Let $I$ be an instance of the \textsc{BestKSubset} problem with $v: 2^X \to \reals$ being a non-negative, submodular, $p$-subseparable function. Let $\alpha$, $\alpha > 1$, and $\epsilon$, $0 < \epsilon < 1$ be the parameters of Algorithm~\ref{alg:pSeparableMin}. Let $S^*$ be some optimal solution for $I$.

Let us consider a single call to \texttt{SingleRun} from function \texttt{Main}.
Let $p_{s}$ denote the probability that such a single invocation of function \texttt{SingleRun} returns an $\alpha$-approximate solution. We will prove the lower-bound of $\Big(\frac{\alpha - 1}{p\alpha}\Big)^K$ for the value of $p_s$. Let $\Ev$ denote the event that during such an invocation, at the beginning of some iteration of the ``for'' loop within the function \texttt{SingleRun}, it holds that:
\begin{align}\label{eq:eventCondition}
v(X) - v(S) \leq \alpha \Big(v(X) - v(S^*)\Big) \textrm{.}
\end{align}
If this event occurs, then there is an iteration where we obtain an $\alpha$-approximate solution and the value of the solution can only increase in further iterations. Thus, in such a case \texttt{SingleRun} definitely returns an $\alpha$-approximate solution. A condition reverse to Inequality~\ref{eq:eventCondition} ($v(X) - v(S) < \alpha \Big(v(X) - v(S^*)\Big)$ can be equivalently formulated as follows:
\begin{align}\label{eq:eventCond}
\frac{v(S^*) - v(S)}{v(X) - v(S)} > \frac{\alpha - 1}{\alpha} \textrm{.}
\end{align}

Now, let us consider a single iteration of the ``for'' loop within the function \texttt{SingleRun}, and let us use the notation from Lemma~\ref{lem:algorithm2}. If $S$ (the value of the partial solution at the beginning of the considered iteration) satisfies Inequality~\eqref{eq:eventCond}, then we have:
\begin{align*}
p_{\hit} & \geq \frac{v(S^*) - v(S)}{\sum_{x \in X}\Big(v(S \cup \{x\}) - v(S)\Big)} & \text{Lemma~\ref{lem:algorithm2}} \\
             & \geq \frac{v(S^*) - v(S)}{p\Big(v(X) - v(S)\Big)} & \hspace{-2cm} \text{$p$-subseparability} \\
             & \geq \frac{\alpha - 1}{p\alpha} \textrm{.} & \text{Inequality~\eqref{eq:eventCond}}
\end{align*}

In each iteration of an invocation of the \texttt{SingleRun} function one of the two things can happen. The even $\Ev$ can occur---in such a case we are sure that the function will return an $\alpha$-approximate solution---or it may not occur, and so at the beginning of the next iteration Inequality~\ref{eq:eventCondition} would hold. In the latter case, by our previous reasoning we know that the probability of selecting an element from $S^*$ in this next iteration is at least equal to $\frac{\alpha - 1}{p\alpha}$. Clearly, if in each iteration an element from $S^*$ is selected, then we eventually get an optimal solution, which in particular is an $\alpha$-approximate solution. Consequently, we can lower-bound $p_{s}$ by $\left(\frac{\alpha - 1}{p\alpha}\right)^K$.

To conclude, we use the standard argument that if we make $x = \lceil\frac{-\ln\epsilon}{p_s}\rceil$ independent calls to \texttt{SingleRun}, then the best output from these calls is an $\alpha$-approximate solution with probability at least equal to:
\begin{align*}
1 - \big(1 - p_s\big)^x \geq 1 - e^{\ln\epsilon} = 1 - \epsilon \textrm{.}
\end{align*}
This completes the proof.
\end{proof}

Interestingly, we can slightly modify the proof of Theorem~\ref{thm:pSeparableMin} so that it would apply with the parameter $\frac{\sum_{x \in X}v(\{x\})}{v(X)}$ (for this parameter we do not use the notion of $p$-separability). On the other hand, for this parameter we give weaker approximation guarantees, by approximating the minimization-or-maximization instead of the minimization variant of the problem.  

\begin{theorem}\label{thm:minOrMax}
For each non-negative, non-decreasing and submodular set function $v: 2^X \to \reals$ there exists a randomized FPT approximation scheme for the minimization-or-maximization variant of \textsc{BestKSubset} problem with the parameter $(K, \frac{\sum_{x \in X}v(\{x\})}{v(X)})$.
\end{theorem}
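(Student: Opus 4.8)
The plan is to reuse Algorithm~\ref{alg:pSeparableMin} and the proof architecture of Theorem~\ref{thm:pSeparableMin} almost verbatim, changing only two ingredients: how the normalizing denominator $\sum_{x\in X}(v(S\cup\{x\})-v(S))$ is bounded, and the definition of the ``stays-bad'' event $\Ev$. Write $q = \frac{\sum_{x\in X}v(\{x\})}{v(X)}$ for the parameter and fix a target ratio $\beta > 1$. As before I would analyze a single call to \texttt{SingleRun}, lower-bound the probability $p_s$ that it returns a solution that is $\beta$-good for the minimization-or-maximization variant, and then invoke the identical amplification step (making $\lceil -\ln\epsilon / p_s\rceil$ independent runs, as in Inequality~\ref{eq:altogetherProbPOpt}) to push the failure probability below $\epsilon$; ranging over all $\beta>1$ then gives the scheme.

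The first change is the denominator. In Theorem~\ref{thm:pSeparableMin} the at-most-$p$-subseparability gave $\sum_{x\in X}(v(S\cup\{x\})-v(S)) \leq p\big(v(X)-v(S)\big)$. Dropping that hypothesis, I would instead use submodularity and non-negativity alone: $v(S\cup\{x\})-v(S)\le v(\{x\})-v(\emptyset)\le v(\{x\})$, hence $\sum_{x\in X}(v(S\cup\{x\})-v(S)) \le \sum_{x\in X}v(\{x\}) = q\cdot v(X)$. The numerator is handled exactly as in the earlier proof, the chain $\sum_{x\in S^*}(v(S\cup\{x\})-v(S)) \ge v(S\cup S^*)-v(S)\ge v(S^*)-v(S)$ using submodularity and monotonicity. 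So, writing $p_{\hit|\Ev}$ for the per-step probability (conditional on $\Ev$) of adding an element of $S^*$, the same manipulation yields
\begin{align*}
p_{\hit|\Ev} \ge \frac{v(S^*)-v(S)}{\sum_{x\in X}\big(v(S\cup\{x\})-v(S)\big)} \ge \frac{v(S^*)-v(S)}{q\cdot v(X)} \textrm{.}
\end{align*}

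The crucial difference, and the reason the guarantee weakens to the minimization-or-maximization variant, lies in lower-bounding $v(S^*)-v(S)$. With $p\big(v(X)-v(S)\big)$ in the denominator, ``minimization is bad'' alone forced $v(S^*)-v(S)$ to be a large fraction of $v(X)-v(S)$ (this was Equation~\ref{eq:eventCond}); now the denominator is the constant $q\cdot v(X)$, so I need $v(S^*)-v(S)$ to be a constant fraction of $v(X)$ itself, and this is only guaranteed when \emph{both} metrics are currently bad. Accordingly I would redefine $\Ev$ as the event that, at the beginning of every iteration of \texttt{SingleRun}, both
\begin{align*}
v(X)-v(S) > \beta\big(v(X)-v(S^*)\big) \quad\text{and}\quad v(S) < \tfrac{1}{\beta}v(S^*)
\end{align*}
hold. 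If $\overline{\Ev}$ occurs, then at some iteration either the minimization or the maximization condition already holds; since $v(S)$ is non-decreasing along the run, $v(X)-v(S)$ only shrinks and $v(S)$ only grows, so that good condition is inherited by the final returned set, which is therefore $\beta$-good for the minimization-or-maximization variant. Conditional on $\Ev$, I would combine the two inequalities: the maximization bound gives $v(S^*) < \frac{\beta}{\beta-1}\big(v(S^*)-v(S)\big)$ and the minimization bound gives $v(X)-v(S^*) < \frac{1}{\beta-1}\big(v(S^*)-v(S)\big)$, which add to $v(X) < \frac{\beta+1}{\beta-1}\big(v(S^*)-v(S)\big)$, i.e.\ $v(S^*)-v(S) > \frac{\beta-1}{\beta+1}v(X)$. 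Substituting this into the displayed bound gives $p_{\hit|\Ev} \ge \frac{\beta-1}{q(\beta+1)}$, a constant for fixed $\beta$ and parameter $q$.

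From here the argument closes exactly as in Theorem~\ref{thm:pSeparableMin}: conditional on $\Ev$ the probability that \texttt{SingleRun} reconstructs an optimal $S^*$ is $p_{\opt|\Ev} \ge (p_{\hit|\Ev})^K \ge \big(\frac{\beta-1}{q(\beta+1)}\big)^K$, and reconstructing $S^*$ trivially yields a solution good for both metrics (as $\beta>1$). Hence $p_s \ge \probability(\overline{\Ev}) + \probability(\Ev)\,p_{\opt|\Ev} \ge \big(\frac{\beta-1}{q(\beta+1)}\big)^K$, so $\lceil -\ln\epsilon\cdot\big(\frac{q(\beta+1)}{\beta-1}\big)^K\rceil$ runs suffice for success probability $1-\epsilon$, which is FPT time in $(K,q)$. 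I expect the only genuinely delicate point to be the same one already present in Theorem~\ref{thm:pSeparableMin}: justifying $p_{\opt|\Ev}\ge(p_{\hit|\Ev})^K$, namely that conditioning on the global event $\Ev$ still leaves a per-step hit probability of at least $\frac{\beta-1}{q(\beta+1)}$ at each of the $K$ steps. I would dispatch it exactly as the earlier proof does, observing that whenever a prefix $S\subseteq S^*$ already makes either metric good we are in $\overline{\Ev}$ and already have a $\beta$-good solution, so the estimate is needed only along runs that remain bad throughout.
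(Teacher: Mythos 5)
Your proposal is correct and follows essentially the same route as the paper: the same Algorithm~\ref{alg:pSeparableMin}, the same redefinition of $\Ev$ as the conjunction of the two ``both metrics still bad'' conditions, and the same amplification argument over $\lceil -\ln\epsilon/p_s\rceil$ runs. The only divergence is algebraic: the paper sets an effective parameter $p = \frac{\beta}{\beta-1}\cdot\frac{\sum_{x\in X}v(\{x\})}{v(X)}$ and uses the maximization-bad condition to recover the inequality $\sum_{x\in X}\big(v(S\cup\{x\})-v(S)\big)\le p\big(v(X)-v(S)\big)$ so that the proof of Theorem~\ref{thm:pSeparableMin} applies verbatim (per-step hit probability $\frac{(\beta-1)^2}{q\beta^2}$ in your notation), whereas you bound the denominator by $q\cdot v(X)$ unconditionally and combine \emph{both} bad-ness conditions to get $v(S^*)-v(S) > \frac{\beta-1}{\beta+1}v(X)$, yielding the slightly sharper per-step bound $\frac{\beta-1}{q(\beta+1)}$ --- a minor quantitative improvement that does not change the conclusion.
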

\begin{proof}
Let us fix $\alpha$, $\alpha > 1$, the required approximation ratio. Let $p = \frac{\alpha}{\alpha-1} \cdot \frac{\sum_{x \in X}v(\{x\})}{v(X)}$. We will show that Algorithm~\ref{alg:pSeparableMin} with such value of the parameter $p$ (this parameter is used to determine the number of iterations of the algorithm) is an $\alpha$-approximation algorithm for the minimization-or-maximization variant of the problem.  We repeat the reasoning from the proof of Theorem~\ref{thm:pSeparableMin}, with the following small modification. In the proof of Theorem~\ref{thm:pSeparableMin} we defined $\Ev$ to denote the event that during a single invocation of the \texttt{SingleRun} function from Algorithm~\ref{alg:pSeparableMin}, at the beginning of some iteration of the ``for'' loop, it holds that: $v(X) - v(S) \leq \alpha \Big(v(X) - v(S^*)\Big)$. In this proof we modify this definition saying that $\Ev$ denotes the event when at the beginning of some iteration of the ``for'' loop within the function \texttt{SingleRun}, at least one the following two conditions hold:
\begin{align*}
v(X) - v(S) &\leq \alpha \Big(v(X) - v(S^*)\Big)  \textrm{,} \\
v(S) &\geq \frac{1}{\alpha} v(S^*) \textrm{.}
\end{align*}
Naturally, if this event occurs, then \texttt{SingleRun} definitely returns an $\alpha$-approximate solution for the minimization-or-maximization variant of the problem.
Similarly as in the proof of Theorem~\ref{thm:pSeparableMin}, we will assess $p_{\hit}$, the probability that in a given iteration of the ``for'' loop, the algorithm selects an element from $S^*$ and adds it to the partial solution $S$, under the condition that $\Ev$ has not yet occurred. Assuming that $\Ev$ does not hold, we have that $v(X) - v(S) > \alpha \Big(v(X) - v(S^*)\Big)$ and $v(S) < \frac{1}{\alpha} v(S^*)$. From the first condition we have that $v(X) > \alpha \Big(v(X) - v(S^*)\Big)$ and so $v(X)(\alpha-1) < \alpha(S^*)$.
\begin{align*}
p_{\hit} & \geq \frac{v(S^*) - v(S)}{\sum_{x \in X}\Big(v(S \cup \{x\}) - v(S)\Big)} & \text{Lemma~\ref{lem:algorithm2}} \\
             & \geq \frac{v(S^*) - v(S)}{\sum_{x \in X}\Big(v(\{x\}) - v(\emptyset)\Big)} & \text{submodularity} \\
             & \geq \frac{v(S^*) - v(S)}{\sum_{x \in X}v(\{x\})} = \frac{1}{p} \cdot \frac{\alpha}{\alpha-1} \cdot \frac{v(S^*) - v(S)}{v(X)} &\\
             & \geq \frac{1}{p} \cdot \frac{\alpha}{\alpha-1} \cdot \frac{v(S^*) - \frac{1}{\alpha}v(S^*)}{v(X)} & \text{$\Ev$ has not occurred} \\
             & = \frac{1}{p} \cdot \frac{v(S^*)}{v(X)} \geq \frac{1}{p} \cdot \frac{\alpha-1}{\alpha} \text{.} & \text{$\Ev$ has not occurred}
\end{align*}
We get exactly the same estimation as in the proof of Theorem~\ref{thm:pSeparableMin}. Thus,
with these modifications the proof of Theorem~\ref{thm:pSeparableMin} can be used in this case.
\end{proof}

Algorithm~\ref{alg:pSeparableMin} can be applied to yet another variant of the problem. Let \textsc{BestSubset} be defined similarly to \textsc{BestKSubset}, with the following difference. In \textsc{BestSubset} we are not putting any constraints on the size of the solution, but we rather look for the smallest possible set $S$ such that $v(S) = v(X)$. Interestingly, Algorithm~\ref{alg:pSeparableMin} can be used to find \emph{exact} solutions to \textsc{BestSubset} for non-negative, non-decreasing, submodular, $p$-subseparable set functions, and it will run in FPT time for the parameter $(K, p)$.

\begin{theorem}
For each non-negative, non-decreasing, submodular, $p$-subseparable set function $v: 2^X \to \reals$, the algorithm that runs Algorithm~\ref{alg:pSeparableMin} for consecutive values of the parameter $K$ until it finds a solution $S$, such that $v(S) = v(X)$, is a randomized FPT exact algorithm for the \textsc{BestSubset} problem for the parameter $(K, p)$.
\end{theorem}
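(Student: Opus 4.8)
The plan is to let $K^*$ denote the size of an optimal solution to \textsc{BestSubset}, i.e., the smallest integer for which some $K^*$-element set $S$ satisfies $v(S) = v(X)$, and then to track the behavior of the outer loop as it runs Algorithm~\ref{alg:pSeparableMin} for consecutive values of $K$. The key observation I would establish first is that when $K = K^*$, the optimal \textsc{BestKSubset} solution $S^*$ already attains $v(S^*) = v(X)$, so that $v(X) - v(S^*) = 0$. This vanishing optimum is precisely what allows a multiplicative approximation guarantee to collapse into exactness.

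Next I would split the analysis by the value of $K$. For every $K < K^*$, by definition of $K^*$ no $K$-element set reaches the value $v(X)$, so each invocation of Algorithm~\ref{alg:pSeparableMin} necessarily returns a set $S$ with $v(S) < v(X)$; the termination test fails deterministically and the loop advances to $K+1$ without any risk of a premature (hence suboptimal) stop. For $K = K^*$ I would apply Theorem~\ref{thm:pSeparableMin}: with probability at least $1 - \epsilon$ the algorithm returns a $\beta$-approximate solution for the minimization variant, meaning $\big(v(X) - v(S)\big) \leq \beta\big(v(X) - v(S^*)\big) = \beta \cdot 0 = 0$. Since $v$ is non-decreasing we also have $v(X) - v(S) \geq 0$, so in fact $v(S) = v(X)$: the approximate solution is exact, the test succeeds, and the returned set has size exactly $K^*$, which is optimal for \textsc{BestSubset}.

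For the running time I would fix $\beta$ to any constant greater than $1$ (say $\beta = 2$) and a constant error probability $\epsilon$. By Theorem~\ref{thm:pSeparableMin} the cost of processing a single value $K$ is $\poly(n,m)\cdot\lceil -\ln\epsilon/(\frac{\beta-1}{p\beta})^K\rceil = \poly(n,m)\cdot O\big((\tfrac{p\beta}{\beta-1})^{K}\ln(1/\epsilon)\big)$, and summing this over $K = 1, \ldots, K^*$ gives a total dominated by the last term, namely $\poly(n,m)\cdot O\big((\tfrac{p\beta}{\beta-1})^{K^*}\big)$. Writing $K = K^*$ for the solution-size parameter, this is of the form $f(K,p)\cdot\poly(n,m)$ for a computable $f$, i.e., FPT time for $(K,p)$. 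Termination is guaranteed regardless of randomness, since $S = X$ trivially achieves $v(S) = v(X)$ once $K = |X|$.

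Finally, I would address the one subtle point, which I expect to be the main obstacle: verifying that the estimates inside the proof of Theorem~\ref{thm:pSeparableMin} remain valid in the degenerate regime $v(X) = v(S^*)$. Here the event $\Ev$ reduces to the single condition $v(X) - v(S) > 0$, i.e.\ $v(S) < v(X)$, so the denominators appearing in Inequality~\ref{eq:eventCond} stay strictly positive and the governing ratio becomes $\frac{v(S^*) - v(S)}{v(X) - v(S)} = 1 > \frac{\beta-1}{\beta}$, whence the lower bound $p_s \geq (\frac{\beta-1}{p\beta})^K$ still holds with no division by zero. I would also note that the only source of error is the stage $K = K^*$, so a constant $\epsilon < 1/2$ already yields a bounded-error randomized exact algorithm (amplifiable to any desired failure probability by increasing the repetition count), completing the argument.
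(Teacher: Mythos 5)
Your proposal is correct, and its mathematical core is the same as the paper's: everything reduces to the per-iteration hit-probability chain (submodular telescoping, monotonicity, at-most-$p$-subseparability) evaluated in the regime where the optimal solution attains $v(X)$. The packaging differs. The paper simply re-runs that chain with $v(S^*)=v(X)$ substituted at the final step, which upgrades the per-iteration bound to $p_{\hit}\geq 1/p$ and the per-run success probability to $1/p^K$; it never discusses the outer loop, compressing it into the remark that the failure probability ``can be decimated.'' You instead invoke Theorem~\ref{thm:pSeparableMin} as a black box, observing that a multiplicative guarantee collapses to exactness when $v(X)-v(S^*)=0$, and then verify that its proof survives this degenerate case --- your check (the event $\Ev$ becomes $v(S)<v(X)$, the governing ratio equals $1>(\beta-1)/\beta$, and no normalizing sum vanishes, since submodularity bounds $\sum_{x\in X}(v(S\cup\{x\})-v(S))$ below by $v(X)-v(S)>0$) is precisely the computation the paper carries out explicitly, so the two routes differ only in whether the chain is re-derived or cited. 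What your version buys is the part the paper omits: the analysis of the wrapper --- deterministic non-termination for $K<K^*$, error probability concentrated at the single stage $K=K^*$ (hence a constant $\epsilon$ suffices and is amplifiable), and the summation of stage costs giving FPT time in $(K^*,p)$. What the paper's version buys is a per-run bound sharper by the factor $((\beta-1)/\beta)^K$, which is immaterial for the FPT claim.
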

\begin{proof}
Let $S^*$ be an optimal solution for the problem. Let us consider a single iteration of the ``for'' loop within the function \texttt{SingleRun}, and let $S$ be the value of the partial solution at the beginning of this iteration. We define $p_{\hit}$ as the probability that in this iteration, an element from $S^*$ is added to the partial solution. We can use a very similar estimation as in the proof of Theorem~\ref{thm:pSeparableMin}: 
\begin{align*}
p_{\hit} & \geq \frac{v(S^*) - v(S)}{\sum_{x \in X}\Big(v(S \cup \{x\}) - v(S)\Big)} & \text{Lemma~\ref{lem:algorithm2}} \\
             & \geq \frac{v(S^*) - v(S)}{p\Big(v(X) - v(S)\Big)} & \hspace{-2.5cm} \text{$p$-subseparability} \\
             & = \frac{1}{p} \textrm{.} & \hspace{-1cm} \text{since}\;\; v(S^*) = v(X) \\
\end{align*}
Thus, we can lower bound the probability that $S^*$ will be found by a single run of the function \texttt{SingleRun}, by $\frac{1}{p^K}$. If we invoke \texttt{SingleRun} a sufficient number of times, then the probability of not finding an optimal solution can be decimated. 
\end{proof}

\section{Applications of the Algorithms}\label{sec:applications}

In this section we show that the assumption about $p$-separability of submodular set functions is plausible. We provide several examples of known computational problems that can be expressed as maximization of $p$-separable, submodular functions. Consequently, we show that the algorithms from Section~\ref{sec:algorithms} are applicable to these problems.

\subsection{Item Selection in Multi-Agent Systems}

Let $N = \{1, 2, \dots n\}$ be the set of agents and let $C = \{a_1, a_2, \ldots, a_m\}$ be the set of \emph{items}. Each agent $i \in N$ is endowed with a \emph{utility function} $u_i: C \to \reals$ that measures how much $i$ desires each of the items. Our goal is to select $K$ items, called \emph{winners}, that in some sense would make the agents most satisfied.
An ordered weighted average (OWA) vector $\lambda$ is a vector of $K$ reals, $\lambda = \langle \lambda_1, \ldots, \lambda_K \rangle$. Given an OWA vector $\lambda$, for each agent $i$ and for each set of $K$ items $S$, we define $u_i(S)$, the satisfaction of $i$ from $S$, in the following way. Let $z_{i, j}(S)$ denote the $j$-th most preferred, from the perspective of agent $i$, element of $S$, that is $z_{i,1}(S), z_{i,2}(S), \ldots z_{i,K}(S)$ are the utilities from $\{u_i(x) : x \in S\}$ sorted in the descending order. Then $u_i(S) = \sum_{j = 1}^K \lambda_j z_j(S)$. The satisfaction of all agents from $S$ is defined as the sum of satisfactions of all the individuals from $S$. For a fixed OWA vector $\lambda$, the problem is, for a given profile of utility functions and for a given integer $K$, to select a subset of $K$ items $S$ that maximizes the total satisfaction of the agents from $S$.

This model captures various natural problems, from winner determination in multiwinner election systems, through recommendation systems, to location problems. For instance the problem of selecting $K$ items under the OWA vector $\lambda = \langle 1, 0, \ldots, 0\rangle$ boils down to the problem of winner determination under Chamberlin and Courant rule~\cite{ccElection}, or to the facility location problem~\cite{far-hek:b:facility-location}. The problem for $\lambda = \langle 1, 1/2, \ldots, 1/K\rangle$ is equivalent to winner determination in the Proportional Approval Voting (PAV) system~\cite{pavVoting}. For more examples of applications of this generic model we refer the reader to the original work of Skowron~et~al.~\cite{owaWinner}.

We say that the agents have $k$-approval utilities if each agent assigns utility equal to $1$ to exactly $k$ items, and utility equal to $0$ to the remaining ones. Such $k$-approval utilities are very popular in the context of social choice, in particular in case of multi-winner election rules~\cite{pavVoting, bau-erd-hem-hem-rot:b:computational-apects-of-approval-voting, azi-bri-con-elk-fre-wal:c:justified-representation, conf/aaai/SkowronF15, conf/atal/AzizGGMMW15}.

Even if the agents have $k$-approval utilities and even for the simple OWA vector $\lambda = \langle 1, 0, \ldots, 0\rangle$ the problem of selecting $K$ items is $\np$-hard~\cite{complexityProportionalRepr} and hard from the perspective of parameterized complexity theory~\cite{fullyProportionalRepr}. Also, it is known that even under these simplifying assumptions, no polynomial-time algorithm can approximate the problem with a better ratio than $1 - \nicefrac{1}{e}$. Thus, this is interesting to see that Algorithms~\ref{alg:submod}~and~\ref{alg:pSeparableMin} are applicable to the problem of selecting $K$ items in the generic OWA-based framework with $k$-approval utilities.

\begin{lemma}\label{lemma:owa}
For a non-decreasing OWA vector and for $k$-approval utilities, the problem of selecting $K$ best items can be expressed as the maximization of a non-negative, nondecreasing submodular function which is (i) $k$-superseparable, and (ii) $k$-subseparable.
\end{lemma}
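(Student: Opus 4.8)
The plan is to write the total-satisfaction objective as $v(S) = \sum_{i \in N} u_i(S)$ and to exploit the observation, recorded earlier in the paper, that a non-negative linear combination of $k$-superseparable (resp.\ at-most-$k$-subseparable) functions is again $k$-superseparable (resp.\ at-most-$k$-subseparable). Hence it suffices to establish properties (i) and (ii) for the satisfaction $u_i$ of a single agent $i$, who under $k$-approval utilities approves a set $A_i \subseteq C$ with $|A_i| = k$. The first step is to observe that $k$-approval utilities collapse the OWA aggregation to a function of a single count: writing $t = |S \cap A_i|$ for the number of approved winners in $S$, we have $u_i(S) = g(t)$, where $g(t) = \sum_{j=1}^{\min(t,K)} \alpha_j$ is the partial-sum function of the OWA weights (we extend $u_i$ to all of $2^C$ by keeping only the top $K$ utilities and padding with zeros, so that $g$ is flat for $t \geq K$). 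Maximizing $v$ over the $K$-element subsets of $C$ is, by definition, exactly the item-selection problem, which is what ``can be expressed as'' requires.

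Next I would record the structural properties. Since the OWA is non-decreasing and hence $\alpha_j \geq 0$, the function $g$ is non-negative, satisfies $g(0)=0$, and is non-decreasing; moreover its increments $g(t+1)-g(t) = \alpha_{t+1}$ are non-increasing, so $g$ is concave. Consequently each $u_i(S) = g(|S \cap A_i|)$ is a non-negative, non-decreasing, submodular function (a concave function of a coverage count), and so is their sum $v$. This is the standard submodularity argument for OWA objectives, so I would keep it brief and spend the bulk of the proof on the two separability bounds; notably, the separability bounds themselves will use only $\alpha_j \geq 0$, not concavity.

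The key computation is the aggregate marginal sum for a single agent. Only items in $A_i \setminus S$ change the count $t$, and each of the $k-t$ such items contributes the same marginal $g(t+1)-g(t)$ (items in $S$ or in $C \setminus (A_i \cup S)$ contribute $0$), so $\sum_{x \in C}\bigl(u_i(S \cup \{x\}) - u_i(S)\bigr) = (k-t)\,(g(t+1)-g(t))$. For $k$-superseparability I compare this with the right-hand side $\sum_{x \in C} u_i(\{x\}) - k\,u_i(S) = k\,g(1) - k\,g(t) = k\,(g(1)-g(t))$: when $t=0$ both sides equal $k\,g(1)$, and when $t \geq 1$ the right-hand side is non-positive (because $g(t) \geq g(1)$) while the left-hand side is non-negative, so the inequality holds in every case. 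For at-most-$k$-subseparability the right-hand side is $k\,u_i(C) - k\,u_i(S) = k\,(g(k)-g(t))$, and since $g(k)-g(t) = \sum_{s=t}^{k-1}\bigl(g(s+1)-g(s)\bigr) \geq g(t+1)-g(t)$ (all summands are non-negative) while $k-t \leq k$, I get $(k-t)(g(t+1)-g(t)) \leq k\,(g(t+1)-g(t)) \leq k\,(g(k)-g(t))$, as required.

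Finally I would invoke the linearity observation to lift both bounds from each $u_i$ to $v = \sum_i u_i$, and check the boundary cases: the count $t$ ranges over $\{0,\dots,k\}$, with $t=k$ making both sides of (ii) vanish, and the regime $k > K$ is handled automatically because the increments $g(t+1)-g(t)$ are zero for $t \geq K$. The step requiring the most care is the marginal-sum identity $\sum_{x \in C}(u_i(S \cup \{x\}) - u_i(S)) = (k-t)(g(t+1)-g(t))$, since everything afterwards follows from elementary monotonicity of $g$; a secondary, more cosmetic point is to confirm that the intended reading of a \emph{non-decreasing} OWA is the condition $\alpha_1 \geq \cdots \geq \alpha_K \geq 0$ that simultaneously yields non-negativity, monotonicity, and (via concavity of $g$) submodularity of $v$.
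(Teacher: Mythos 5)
Your proof is correct and follows essentially the same route as the paper's: decompose $v=\sum_i u_i$, use closure of the separability properties under sums, compute the per-agent marginal sum as $(k-t)\,(g(t+1)-g(t))$ (the paper's $(k-\ell)\alpha_{\ell+1}$), and run the same case analysis ($t=0$ versus $t\geq 1$) for superseparability and the same weight-sum bound for at-most-$k$-subseparability. The only departures are cosmetic but welcome: you prove submodularity directly via concavity of the partial-sum function $g$ where the paper cites prior work, you handle the boundary regimes ($t\geq K$, $t=k$) explicitly, and you correctly note that the separability bounds need only $\alpha_j\geq 0$ while the ordering $\alpha_1\geq\cdots\geq\alpha_K$ is what submodularity requires.
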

\begin{proof}
We start from defining an appropriate set function $v$. It is easy to extend the definition of the utility of an agent $i$ from a set of items $S$, to be applicable to sets with a different number of elements than $K$: $u_i(S) = \sum_{j = 1}^{\min(K, |S|)} \lambda_j z_{i, j}(S)$. For each set $S \subseteq C$, we define $v(S)$ as the total satisfaction of the agents from $S$: $v(S) = \sum_{i \in N} u_i(S)$.

Skowron~et~al.~\cite{conf/aaai/SkowronF15} showed that such defined function is submodular. We will show that it is also $k$-superseparable and $k$-subseparable. Since the sum of $p$-superseparable (respectively, $p$-subseparable) set functions is also $p$-superseparable (respectively, $p$-subseparable), it suffices to show that our hypothesis is true for an arbitrary $k$-approval utility function of a single agent $i$, $u_i$.

We fix $S \subseteq C$: let $\ell$ denote the number of elements in $S$ which are approved of by $i$. Since, the utilities of the agents are $k$-approval, there are only $(k-\ell)$ elements in $C\setminus S$ for which $u_{i}(\{x\}) > 0$. For each such an element $x$, we have $\big(v(S \cup \{x\}) - v(S)\big) = \lambda_{\ell + 1}$. Thus:
\begin{align*}
\sum_{x \in X} \big(u_i(S \cup \{x\}) - u_i(S)\big) = (k - \ell) \lambda_{\ell + 1} \textrm{.}
\end{align*}
If $\ell \geq 1$, then:
\begin{align*}
\sum_{x \in X} \big(u_i(S \cup \{x\}) - u_i(S)\big) = (k - \ell) \lambda_{\ell + 1} \geq 0 = k \lambda_1 - k \lambda_1 \geq \sum_{x \in X}u_i(\{x\}) - ku_i(S) \textrm{.}
\end{align*}
If $\ell = 0$, then:
\begin{align*}
\sum_{x \in X} \big(u_i(S \cup \{x\}) - u_i(S)\big) = k \lambda_{1} = \sum_{x \in X}u_i(\{x\}) \geq \sum_{x \in X}u_i(\{x\}) - ku_i(S) \textrm{.}
\end{align*}
Which shows that $u_i$ is $k$-superseparable, and thus that $v$ is $k$-superseparable.
Further, 
\begin{align*}
\sum_{x \in X} \big(u_i(S \cup \{x\}) - u_i(S)\big) &= (k - \ell) \lambda_{\ell + 1} \leq (k - \ell) \big(\sum_j \lambda_j - \sum_{j \leq \ell}\lambda_j\big) \\
                                                    &= (k - \ell)\big(u_i(X) - u_i(S)\big) \leq k\big(u_i(X) - u_i(S)\big) \textrm{.}
\end{align*}
Which shows that $u_i$ is $k$-subseparable. This completes the proof.
\end{proof}

As the corollary we get that Algorithms~\ref{alg:submod}~and~\ref{alg:pSeparableMin} are applicable to the problem.

\begin{corollary}\label{cor:owa}
For each non-increasing OWA vector there exists FPT approximation schemes for the maximization and minimization variants of the problem of selecting $K$ items with $k$-approval utilities for the parameter $(K, k)$.
\end{corollary}

There are numerous consequences of Corollary~\ref{cor:owa}. First, Algorithms~\ref{alg:submod}~and~\ref{alg:pSeparableMin} are applicable to the facility location problem~\cite{Jain:2001:AAM:375827.375845}. Further, since the item selection problem is a model for multi-winner election systems (with candidates corresponding to items and voters to agents), Algorithms~\ref{alg:submod}~and~\ref{alg:pSeparableMin} are applicable to the problem of finding winners under Chamberlin and Courant~\cite{ccElection} and Proportional Approval Voting~\cite{pavVoting} election systems. In all these cases the assumption that the number of approved items is small is realistic. For instance, in case of elections, in some countries, the voting procedure imposes constraints on how many candidates a voter can approve of (for instance, in the Polish parliamentary elections these are only three candidates).
Our results can be also extended to cover geometric utilities, that is the case where the set of utilities of each agent has the form $\{d^{m-1}, d^{m-2}, \ldots, 1 \}$, for some $d > 1$.

\subsection{Matching and Assignment Problems}

The algorithm for $p$-superseparable set functions, i.e., Algorithm~\ref{alg:submod}, is also applicable to variants of assignment problems in bipartite graphs. We provide an example by showing how to apply our results to the \textsc{Weighted-B-$K$-Matching} problem, which is similar in spirit to the item selection problem from the previous subsection. Here, however, we introduce additional capacity constraints, which says that items cannot be shared among too many agents.

In the \textsc{Weighted-B-$K$-Matching} problem we are given a set of vertices $X \cup Y$, a set of edges $E$ (there are no edges neither between the vertices from $X$ nor between the vertices from $Y$), a weight function $w: E \to \reals$, and a capacity function $c: X \to \integers$. The goal is to find a subset of edges with the maximal total weight, such that each vertex $x \in X$ belongs to at most $c(x)$ of the selected edges, each vertex $y \in Y$ belongs to at most one of the selected edges, and altogether there are at most $K$ vertices from $X$ which belong to some 
of the selected edges.

Let us explain the relation between the \textsc{Weighted-B-$K$-Matching} problem and the item selection problem from the previous subsection. We observe that vertices from $X$ can be thought of as items and the vertices from $Y$ can be identified with agents. A weight $w$ of an edge $(x, y)$, $x \in X$ and $y \in Y$, corresponds to the utility that the agent $y$ assigns to the item $x$. In the  \textsc{Weighted-B-$K$-Matching} problem we need to select $K$ items from $X$ and assign each agents to the selected items so that the number of agents assigned to a single selected item does not exceed its capacity. 

This matching problem is at least as hard as the item selection problem from the previous subsection for the fixed OWA vector $\lambda = \langle 1, 0, \ldots, 0\rangle$. In particular, it is $\np$-hard,  for many natural parameters it is hard from the point of view of the parameterized complexity theory, and there exists no polynomial-time algorithm that would approximate it with a better ratio than $1 - \nicefrac{1}{e}$. Yet, our algorithms are also applicable to this problem.

\begin{lemma}\label{lemma:bMatching}
The \textsc{Weighted-B-$K$-Matching} problem with the degree of vertices from $Y$ bounded by $p$ can be expressed as the maximization of a non-negative, nondecreasing submodular function which is $p$-superseparable.
\end{lemma}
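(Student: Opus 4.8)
The plan is to realize the problem as \textsc{BestKSubset} for the set function $v$ defined by letting $v(S)$ be the maximum total weight of a feasible assignment that uses only the $X$-vertices in $S$: the maximum-weight $b$-matching in which every $x \in S$ is incident to at most $c(x)$ chosen edges and every $y \in Y$ to at most one. Choosing the edges of an optimal \textsc{Weighted-B-Matching} is then equivalent to choosing the set $S \subseteq X$ of used left-vertices (with $\|S\| \le K$) and taking the optimal matching inside $S$, so the two problems share the same optimum and $v$ is polynomially computable (it is an ordinary transportation problem). We may assume all weights are nonnegative, since an optimal matching never uses a negative edge (each $y$ may be left unmatched); hence $v(\emptyset)=0$ and $v\ge 0$, and monotonicity is immediate because enlarging $S$ only adds feasible assignments.

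For submodularity I would invoke the standard exchange/alternating-path argument for matching values: given $A \subseteq B$ and $x \notin B$, the augmentation introduced by $x$ in an optimal matching for $B \cup \{x\}$ can be transplanted on top of an optimal matching for $A$, and since $A$ imposes fewer capacity constraints the resulting gain is at least as large, yielding $v(A\cup\{x\}) - v(A) \ge v(B\cup\{x\}) - v(B)$. This is the familiar fact that the value of a maximum-weight $b$-matching is submodular in the set of available supply vertices.

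The core of the lemma is $p$-superseparability, which I would prove by a direct swap argument against a fixed optimal matching $M$ for $S$. Write $m(y)$ for the weight of the $M$-edge at $y$ (zero if $y$ is unmatched), so $\sum_y m(y)=v(S)$; write $\mu(x)$ for the total weight $x$ contributes in $M$, so that $\sum_x \mu(x)=v(S)$ as well, with $Z_x$ the agents $x$ serves in $M$; and let $T_x$ be the $c(x)$ highest-weight neighbours of $x$, so that matching $x$ to $T_x$ realizes $v(\{x\})$. For a fixed $x$, replace in $M$ the edges incident to $x$ and to the agents of $T_x$ by the matching of $x$ to $T_x$; the result is feasible for $S\cup\{x\}$ (and for $S$ itself when $x\in S$), and comparing its weight against optimality of $M$ gives the single uniform inequality
\[
v(\{x\}) \;\le\; \big(v(S\cup\{x\}) - v(S)\big) + \mu(x) + \sum_{y \in T_x \setminus Z_x} m(y)
\]
for every $x \in X$ (for $x\in S$ the marginal term is $0$; for $x\notin S$ we have $\mu(x)=0$ and $Z_x=\emptyset$). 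Summing over all $x$, the term $\sum_x \mu(x)$ contributes exactly $v(S)$, while the double sum equals $\sum_y m(y)\,|\{x : y \in T_x \setminus Z_x\}|$.

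The delicate point, and the step I expect to be the main obstacle, is getting the constant to be exactly $p$ rather than $p+1$: the unit coming from $\sum_x\mu(x)=v(S)$ must be absorbed into the degree bound. It can be, because for a matched agent $y$ its own partner $x_y$ satisfies $y \in Z_{x_y}$, so $x_y$ is excluded from $\{x : y \in T_x \setminus Z_x\}$, leaving at most $\deg(y)-1 \le p-1$ incident items; together with the single unit from $\mu$ this gives a coefficient of at most $p$ on each $m(y)$. Hence the whole right-hand correction is at most $p\sum_y m(y)=p\,v(S)$, which rearranges to $\sum_{x}\big(v(S\cup\{x\}) - v(S)\big) \ge \sum_x v(\{x\}) - p\,v(S)$, exactly the $p$-superseparability condition. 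Minor care is needed when $\|N(x)\| < c(x)$ or when ties occur in the choice of $T_x$, but neither affects the counting, so the function $v$ has all the required properties.
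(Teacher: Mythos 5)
Your proof is correct and takes essentially the same route as the paper's: both define $v(S)$ as the maximum weight of a matching in the graph induced by $S \cup Y$, and both establish $p$-superseparability by a swap argument against a fixed optimal matching $M$ for $S$, merging in the singleton-optimal matching of each $x$ (your $T_x$ is the paper's $\mathcal{M}_2$) and charging each $M$-edge to the at most $\deg(y) \leq p$ vertices $x$ adjacent to its $Y$-endpoint. Your bookkeeping---splitting the charge on each edge into one unit from $\sum_x \mu(x) = v(S)$ plus at most $p-1$ units from the count over $T_x \setminus Z_x$---is a slightly more explicit rendering of the paper's counting (and handles the $x \in S$ case more carefully), but it is the same argument.
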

\begin{proof}
For each set $S \subseteq X$ we define $v(S)$ as the maximum weight of a matching for the graph induced by the set of vertices $S \cup Y$. Such defined $v$ is non-negative and submodular (\cite{sko-fal-sli:j:multiwinner}, Theorem 9), and can be evaluated in polynomial time.

First, we show that $v$ is $p$-superseparable. For each $S$, let $\mathcal{M}$ denote some optimal matching for the graph induced by the set of vertices $S \cup Y$. For each $S$, and for each $x \in X$, $\big(v(S) + v(\{x\}) - v(S \cup \{x\}) \big)$ is no greater than the weight of such edges $(x', y') \in \mathcal{M}$ that there exists an edge between $x$ and $y'$. This holds because we can construct a feasible matching for the graph induced by $S \cup \{x\} \cup Y$ in the following way. Let us denote the matchings for the graphs induced by $S \cup Y$ and by $\{x\} \cup Y$, as $\mathcal{M}_1$ and $\mathcal{M}_2$ respectively. If we merge $\mathcal{M}_1$ and $\mathcal{M}_2$ and drop each edge $(x', y')$ from $\mathcal{M}_1$ that connects a vertex $y' \in Y$ which is already connected in $\mathcal{M}_2$ (i.e., each edge $(x', y')$ from $\mathcal{M}_1$ such that there exists an edge $(x, y')$ in $\mathcal{M}_2$), we get a feasible matching for the graph induced by $S \cup \{x\} \cup Y$. Consequently, $\sum_{x \in X} \Big(v(S) + v(\{x\}) - v(S \cup \{x\}) \Big)$ is no greater than the weight of $\mathcal{M}_1$ times the bound on the degree of vertices from $Y$. That is:
\begin{align*}
\sum_{x \in X} \Big(v(S) + v(\{x\}) - v(S \cup \{x\}) \Big) \leq p\cdot v(S) \textrm{,}
\end{align*}
which is equivalent to the condition for $p$-superseparability. This proves the thesis.
\end{proof}

\begin{corollary}\label{cor:bMatching}
There exists an FPT approximation scheme for the maximization variant of the \textsc{Weighted-B-$K$-Matching} for the parameter $p$, the bound on the degree of vertices from $Y$.
\end{corollary}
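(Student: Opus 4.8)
The plan is to obtain the corollary by feeding the encoding of Lemma~\ref{lemma:bMatching} into Theorem~\ref{thm:pSubmodularMaximization}. First I would recall the set function from that lemma: for $S \subseteq X$, let $v(S)$ be the weight of a maximum-weight b-matching in the subgraph induced by $S \cup Y$. Lemma~\ref{lemma:bMatching} already certifies that this $v$ is nonnegative, nondecreasing, submodular, and $p$-superseparable, where $p$ bounds the degrees of the vertices in $Y$, so I would not reprove any of that. The two things I would still check are (i) that maximizing $v$ over $K$-element subsets of $X$ is \emph{literally} the \textsc{Weighted-B-Matching} objective, which is immediate since the best matching on $S \cup Y$ respecting the $Y$-side and capacity constraints is exactly what $v(S)$ records, and (ii) that $v$ is polynomially computable, the one hypothesis of Theorem~\ref{thm:pSubmodularMaximization} not explicitly handled by the lemma: evaluating $v(S)$ is a single weighted b-matching computation, solvable in polynomial time by standard flow or linear-programming techniques.

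With these verifications in hand, Theorem~\ref{thm:pSubmodularMaximization} applies without modification: for every $\beta$ with $0 \le \beta < 1$, Algorithm~\ref{alg:submod} returns a $\beta$-approximate solution for the maximization variant in time $\poly(n,m)\cdot\binom{\frac{pK}{1-\beta}+K}{K}$. Since $\binom{\frac{pK}{1-\beta}+K}{K} \le \bigl(K(\tfrac{p}{1-\beta}+1)\bigr)^{K}$ depends only on $K$, $p$, and the fixed ratio $\beta$, this family of algorithms is an FPT approximation scheme for the parameter $(K,p)$, exactly as in Corollary~\ref{cor:owa}.

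The delicate point, and the one I expect to be the main obstacle, is reducing the parameterization from $(K,p)$ down to $p$ alone, as the statement is phrased. The natural lever is that the cardinality constraint becomes inactive for large $K$: because each vertex of $Y$ lies on at most one selected edge, any feasible matching uses at most $|Y|$ distinct vertices of $X$, so for $K \ge |Y|$ the $K$-bound can be dropped and the instance is solved \emph{exactly} in polynomial time by an ordinary weighted b-matching computation, while for $K < |Y|$ one falls back on Algorithm~\ref{alg:submod}. The difficulty is that this split only bounds the effective cardinality by $\min(K,|Y|)$ rather than by a function of $p$, so it still leaves a genuine dependence on $K$; indeed, since \textsc{MaxCover} with frequencies bounded by $p$ is a special case of this problem and is $\wone$-hard for the parameter $K$, one should not expect $K$ to be removable for free. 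I would therefore either state and prove the corollary for the parameter $(K,p)$, keeping the clean two-line derivation above, or explicitly restrict to instance families where $K$ is bounded in terms of $p$, and flag the removal of the residual $K$-dependence as the only non-routine step.
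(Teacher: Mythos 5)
Your derivation is exactly the paper's (implicit) proof: Corollary~\ref{cor:bMatching} is stated there with no separate argument, as an immediate consequence of Lemma~\ref{lemma:bMatching} fed into Theorem~\ref{thm:pSubmodularMaximization}, which is precisely your chain; your additional check that $v(S)$ is polynomially computable (a single weighted b-matching computation) is a hypothesis of \textsc{BestKSubset} that the paper glosses over, so including it is a small improvement rather than a deviation.

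Your hesitation about the parameterization is also well founded, and you were right not to manufacture an argument for it. The derivation yields an FPT approximation scheme for the parameter $(K,p)$, not for $p$ alone, and the paper itself effectively concedes this: the sentence immediately following the corollary applies it to Monroe winner determination ``for the parameter $(k, K)$'', reinstating the dependence on $K$. One can say more than your $\wone$-based heuristic (which, as you note, concerns exact solvability and so does not by itself rule out approximation schemes): the statement with parameter $p$ alone is actually false unless $\p = \np$. Indeed, \textsc{MaxCover} with frequencies bounded by $2$ embeds into \textsc{Weighted-B-Matching} with $Y$-degrees bounded by $2$ (sets as $X$, elements as $Y$, edge weights equal to element weights, unconstrained capacities), and this is the maximum $K$-vertex coverage problem on graphs, which admits no PTAS unless $\p = \np$; an FPT approximation scheme for the fixed constant parameter $p = 2$ would be exactly such a PTAS. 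So the corollary as literally phrased is a misstatement, and your fallback --- proving and stating it for the parameter $(K,p)$ --- is both the correct reading and what the paper actually uses downstream.
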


Since finding winners under the Monroe election system~\cite{monroeElection} is computationally equivalent to solving \textsc{Weighted-B-$K$-Matching} with specific weights and specific capacities~\cite{sko-fal-sli:j:multiwinner}, Corollary~\ref{cor:bMatching} implies that for winner determination under the Monroe election system with $k$-approval utilities, there exists an FPT approximation scheme for the parameter $(k, K)$.

\subsection{The \textsc{MaxWeightCover} Problem}\label{sec:maxWeightCover}

In this subsection we explain that all our algorithms are applicable to \textsc{MaxWeightCover}, a generalized variant of the \textsc{MaxCover} problem.

Let us recall that in the \textsc{MaxWeightCover} problem, we are given a set $N = \{e_1, e_2, \ldots e_n\}$ of $n$ elements and a collection $X = \{S_1, \ldots, S_m\}$ of $m$ subsets of $N$. Each element $e_i$ has a weight $w_i$. The goal is to find a subcollection $\calC$ of $X$ of size at most $K$ that maximizes the total weight of covered elements: $\sum_{i: i \in \bigcup \calC}w_i$.
Let us also recall that a \emph{frequency} of an element $e_i$ is the number of sets that contain $e_i$. Frequency of elements is a natural parameter considered in the context of approximability of covering problems~\cite{Vazirani:2001:AA:500776}. To the best of our knowledge, for polynomial-time algorithms, there exists no better guarantee for the \textsc{MaxCover} problem with bounded frequencies of elements than $(1 - \nicefrac{1}{e})$. This is specifically interesting, since such better approximation algorithms exists for the very similar problem \textsc{SetCover}~\cite{Vazirani:2001:AA:500776}. 

In Example~\ref{ex:maxCover} we have shown that the \textsc{MaxWeightCover} problem with the frequency of elements upper-bounded by $p$ can be expressed as the maximization of a non-negative, non-decreasing submodular function which is (i) $p$-superseparable, and (ii) $p$-subseparable, and that the \textsc{MaxWeightCover} problem with the frequency of elements lower-bounded by $p$ can be expressed as the maximization of a non-negative, non-decreasing submodular function which is rev-$p$-subseparable. As a corollary of these observations we get that all our algorithms can be applied to certain natural variants of the \textsc{MaxWeightCover} problem.

\begin{corollary}\label{cor:cover1}
There exists an FPT approximation scheme both for the maximization and for the minimization variant of the \textsc{MaxWeightCover} for the parameter $(K, p)$, where $p$ is the upper-bound on the frequency of the elements.
\end{corollary}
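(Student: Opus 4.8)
The plan is to obtain the corollary as a direct composition of the structural result in Lemma~\ref{prop:maxCoverSubmodular} with the algorithmic guarantees already established in Section~\ref{sec:algorithms}. First I would recall from Lemma~\ref{prop:maxCoverSubmodular} that for an instance of \textsc{MaxWeightCover} in which every element has frequency at most $p$, the coverage set function $v(\calC)$, defined as the total weight of elements covered by the sets in $\calC$, is simultaneously non-negative, non-decreasing, submodular, $p$-superseparable, and at-most-$p$-subseparable. This single function therefore meets the hypotheses of both of the maximization and minimization algorithms at once, so no new function or reduction needs to be constructed.

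For the maximization variant, I would observe that $v$ satisfies exactly the hypotheses of Theorem~\ref{thm:pSubmodularMaximization} (non-negative, non-decreasing, submodular, and $p$-superseparable). Hence, for every $0 \leq \beta < 1$, Algorithm~\ref{alg:submod} returns a $\beta$-approximate solution in time $\poly(n,m) \cdot \binom{\frac{pK}{1-\beta} + K}{K}$. Since the binomial coefficient is bounded by a function of $K$ and $p$ alone (for fixed $\beta$), this running time is FPT for the parameter $(K,p)$, giving the desired approximation scheme for maximization.

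For the minimization variant, I would instead invoke the at-most-$p$-subseparability guaranteed by Lemma~\ref{prop:maxCoverSubmodular}, which together with the remaining properties matches the hypotheses of Theorem~\ref{thm:pSeparableMin}. Thus, for every $\beta > 1$ and $0 < \epsilon < 1$, Algorithm~\ref{alg:pSeparableMin} returns, with probability at least $(1-\epsilon)$, a $\beta$-approximate solution for the minimization variant in time $\poly(n,m) \cdot \lceil -\ln\epsilon / (\tfrac{\beta-1}{p\beta})^K \rceil$, which is again FPT for $(K,p)$. Combining the two gives FPT approximation schemes for both variants.

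The content of this corollary is entirely carried by the lemma and the two theorems, so there is no genuine obstacle to overcome; the only point requiring a line of care is checking that the \emph{same} coverage function $v$ furnishes both separability properties needed by the two different algorithms, and that the stated running times are indeed of the form $f(K,p)\cdot\poly(n,m)$ rather than merely polynomial in the instance for each fixed parameter value. Both of these checks are immediate from the explicit time bounds in Theorems~\ref{thm:pSubmodularMaximization} and~\ref{thm:pSeparableMin}.
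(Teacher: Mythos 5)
Your proposal is correct and follows exactly the route the paper intends: Corollary~\ref{cor:cover1} is stated as an immediate consequence of Lemma~\ref{prop:maxCoverSubmodular} combined with Theorem~\ref{thm:pSubmodularMaximization} (via Algorithm~\ref{alg:submod}) for the maximization variant and Theorem~\ref{thm:pSeparableMin} (via Algorithm~\ref{alg:pSeparableMin}) for the minimization variant, with the running-time bounds of the form $f(K,p)\cdot\poly(n,m)$ as you verify. The only caveat, which the paper itself glosses over, is that the minimization half of the scheme is randomized (it succeeds with probability $1-\epsilon$), so strictly speaking the corollary yields a randomized FPT approximation scheme for that variant.
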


As an implication from Corollary~\ref{cor:ptas}, we also get that:

\begin{corollary}\label{cor:cover2}
Let $\gamma$ be a constant and let $\gamma$-\textsc{MaxWeightCover} be the variant of the \textsc{MaxWeightCover} problem with additional assumption that each element is covered by at least $\gamma$ fraction of the sets from $X$. There exists a PTAS for $\gamma$-\textsc{MaxWeightCover}.
\end{corollary}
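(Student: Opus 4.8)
The plan is to chain together two results already established in the paper, namely Lemma~\ref{prop:maxCoverSubmodular} and Corollary~\ref{cor:ptas}. The key observation is that the parameter $\gamma$ in the definition of $\gamma$-\textsc{MaxWeightCover}, which bounds the \emph{fraction} of sets covering each element, translates directly into a lower bound on the frequency measured as an absolute value scaled by the size of the universe. Once this translation is made explicit, the corollary follows by a direct composition of the two prior statements with no new technical work.

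First I would identify the universe of the corresponding \textsc{BestKSubset} instance as the collection $X = \{S_1, \ldots, S_m\}$ of subsets, so that the number of candidate sets is $|X| = m$. The defining hypothesis of $\gamma$-\textsc{MaxWeightCover} is that every element is covered by at least a $\gamma$ fraction of the sets in $X$; hence the frequency of each element---i.e., the number of sets containing it---is lower-bounded by $\gamma m = \gamma|X|$. This is exactly the ``frequency lower-bounded by $p$'' hypothesis of Lemma~\ref{prop:maxCoverSubmodular}, instantiated with $p = \gamma|X|$.

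Next I would invoke the second part of Lemma~\ref{prop:maxCoverSubmodular}: since the frequency of every element is lower-bounded by $\gamma|X|$, the objective of $\gamma$-\textsc{MaxWeightCover} (the total weight of covered elements) can be expressed as a nonnegative, nondecreasing, submodular set function that is at-least-$(\gamma|X|)$-subseparable. I would then apply Corollary~\ref{cor:ptas} verbatim, which guarantees a PTAS for the maximization variant of \textsc{BestKSubset} precisely when the set function is at-least-$(\gamma|X|)$-subseparable. Because $\gamma$-\textsc{MaxWeightCover} is exactly such a maximization instance, the PTAS for \textsc{BestKSubset} \emph{is} a PTAS for $\gamma$-\textsc{MaxWeightCover}.

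The argument is essentially a two-line reduction, so there is no genuine mathematical difficulty to overcome. The only point requiring minor care---and the closest thing to an obstacle---is the bookkeeping that ``at least a $\gamma$ fraction of the sets'' yields exactly the threshold $p = \gamma|X|$ expected by Corollary~\ref{cor:ptas}, rather than, say, $\gamma(|X|-1)$ or a floor/ceiling of it; one should confirm that the inequality is non-strict and that scaling the fraction by $|X|$ gives the correct absolute frequency bound. Everything downstream of that identification is a direct appeal to the already-proven results.
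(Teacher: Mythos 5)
Your proposal is correct and matches the paper's own route exactly: the paper presents this corollary as a direct implication of Corollary~\ref{cor:ptas} combined with the at-least-$p$-subseparability part of Lemma~\ref{prop:maxCoverSubmodular}, which is precisely your two-step composition with $p = \gamma|X|$. Your explicit bookkeeping that ``a $\gamma$ fraction of the sets'' instantiates the frequency lower bound as $\gamma m = \gamma|X|$ is the only content the paper leaves implicit, and you have it right.
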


Corollaries~\ref{cor:cover1}~and~\ref{cor:cover2} extend the recent results for the \textsc{MaxCover} problem~\cite{conf/aaai/SkowronF15}. Further, interestingly Theorem~\ref{thm:minOrMax} says that there exists a randomized FPT approximation scheme for the minimization-or-maximization variant of the \textsc{MaxWeightCover} problem, for the parameter $(K, p_{av})$, where $p_{av}$ is an average frequency of an element.

\section{Conclusions}

In this paper we have considered approximation algorithms for the problem of maximizing submodular set functions. Since it is known that the greedy algorithm achieves approximation ratio of $(1 - \nicefrac{1}{e})$ and that no polynomial-time algorithm can approximate the problem better, we have initiated the study on algorithms for the problem that run in super-polynomial time. In our study we have followed the approach of parameterized complexity theory. We have identified three new properties of set functions, called $p$-separability properties, and we have considered the parameter $p$ from the definition of $p$-separability. For $p$ combined with $K$, the size of the solution, we have shown that the problem can be arbitrarily well approximated by algorithms that run in FPT time. We have justified our choice of the parameters by providing numerous examples of computational problems which can be expressed as maximization of submodular $p$-separable set functions, and by arguing that in many real-life problems the value of $p$ is small.

We have exposed the difference between approximating the maximization and minimization variants of the problem and we have provided a new weaker definition of approximability: approximation of minimization-or-maximization variant of the problem. We have shown that it is possible to approximate our problem arbitrarily well using algorithms that run in FPT time for the more general parameter $\Big(K, \frac{\sum_{x \in X}v(\{x\})}{v(X)}\Big)$.

There are many natural ways in which this research can be extended. We believe that one of the promising approaches is to consider the problem with additional constraints, such as knapsack constraints or matroid constraints.

\subsection*{Acknowledgments.}
The author thanks Piotr Faliszewski for his helpful comments.
This research has been supported by Europe Research Grant ERC-StG 639945 and by a Humboldt Research Fellowship for Postdoctoral Researchers.

\bibliographystyle{abbrv}
\bibliography{main}

\begin{thebibliography}{10}

\bibitem{azi-bri-con-elk-fre-wal:c:justified-representation}
H.~Aziz, M.~Brill, V.~Conitzer, E.~Elkind, R.~Freeman, and T.~Walsh.
\newblock Justified representation in approval-based committee voting.
\newblock {\em Social Choice and Welfare}, 48(2):461--485, 2017.

\bibitem{conf/atal/AzizGGMMW15}
H.~Aziz, S.~Gaspers, J.~Gudmundsson, S.~Mackenzie, N.~Mattei, and T.~Walsh.
\newblock Computational aspects of multi-winner approval voting.
\newblock In {\em Proceedings of the 2015 International Conference on
  Autonomous Agents and Multiagent Systems (AAMAS-2015)}, pages 107--115, 2015.

\bibitem{Structuredsparsity:2011vl}
F.~Bach.
\newblock {Structured sparsity-inducing norms through submodular functions}.
\newblock In {\em Proceedings of Advances in Neural Information Processing
  Systems 23 (NIPS-2010)}, pages 118--126, 2010.

\bibitem{bau-erd-hem-hem-rot:b:computational-apects-of-approval-voting}
D.~Baumeister, G.~Erd\'{e}lyi, E.~Hemaspaandra, L.~Hemaspaandra, and J.~Rothe.
\newblock Computational aspects of approval voting.
\newblock In J.~Laslier and R.~Sanver, editors, {\em Handbook of Approval
  Voting}, pages 199--251. Springer, 2010.

\bibitem{fullyProportionalRepr}
N.~Betzler, A.~Slinko, and J.~Uhlmann.
\newblock On the computation of fully proportional representation.
\newblock {\em Journal of Artificial Intelligence Research}, 47:475--519, 2013.

\bibitem{Calinescu:2007:MSS:1419497.1419517}
G.~Calinescu, C.~Chekuri, M.~P\'{a}l, and J.~Vondr\'{a}k.
\newblock Maximizing a submodular set function subject to a matroid constraint
  (extended abstract).
\newblock In {\em Proceedings of the 12th International Conference on Integer
  Programming and Combinatorial Optimization (IPCO-2007)}, pages 182--196,
  2007.

\bibitem{ccElection}
B.~Chamberlin and P.~C{ourant}.
\newblock Representative deliberations and representative decisions:
  {P}roportional representation and the borda rule.
\newblock {\em American Political Science Review}, 77(3):718--733, 1983.

\bibitem{chvatal1979setcover}
V.~Chvatal.
\newblock {A Greedy Heuristic for the Set-Covering Problem}.
\newblock {\em Mathematics of Operations Research}, 4(3):233--235, 1979.

\bibitem{journals/dam/ConfortiC84}
M.~Conforti and G.~Cornu{\'{e}}jols.
\newblock Submodular set functions, matroids and the greedy algorithm: Tight
  worst-case bounds and some generalizations of the rado-edmonds theorem.
\newblock {\em Discrete Applied Mathematics}, 7(3):251--274, 1984.

\bibitem{cro-pas:j:cover}
F.~Croce and V.~Paschos.
\newblock Efficient algorithms for the max $k$-vertex cover problem.
\newblock {\em Journal of Combinatorial Optimization}, To appear (available as
  an online-first article).

\bibitem{books/sp/CyganFKLMPPS15}
M.~Cygan, F.~Fomin, L.~Kowalik, D.~Lokshtanov, D.~Marx, M.~Pilipczuk,
  M.~Pilipczuk, and S.~Saurabh.
\newblock {\em Parameterized Algorithms}.
\newblock Springer, 2015.

\bibitem{ICML2011Das_542}
A.~Das and D.~Kempe.
\newblock Submodular meets spectral: Greedy algorithms for subset selection,
  sparse approximation and dictionary selection.
\newblock In {\em Proceedings of the 28th International Conference on Machine
  Learning (ICML-11)}, pages 1057--1064, 2011.

\bibitem{dow-fel:b:parameterized}
R.~Downey and M.~Fellows.
\newblock {\em Parameterized Complexity}.
\newblock Springer-Verlag, 1999.

\bibitem{far-hek:b:facility-location}
F.~Farahani and M.~Hekmatfar, editors.
\newblock {\em Facility Location: {Concepts}, Models, and Case Studies}.
\newblock Springer, 2009.

\bibitem{fei:j:cover}
U.~Feige.
\newblock A threshold of $\ln n$ for approximating set cover.
\newblock {\em J.~ACM}, 45(4):634--652, 1998.

\bibitem{journals/siamcomp/FeigeMV11}
U.~Feige, V.~Mirrokni, and J.~Vondr{\'{a}}k.
\newblock Maximizing non-monotone submodular functions.
\newblock {\em SIAM Journal on Computing}, 40(4):1133--1153, 2011.

\bibitem{flu-gro:b:parameterized-complexity}
J.~Flum and M.~Grohe.
\newblock {\em Parameterized Complexity Theory}.
\newblock Springer-Verlag, 2006.

\bibitem{Iwata:2001:CSP:502090.502096}
S.~Iwata, L.~Fleischer, and S.~Fujishige.
\newblock A combinatorial strongly polynomial algorithm for minimizing
  submodular functions.
\newblock {\em J.~ACM}, 48(4):761--777, July 2001.

\bibitem{Jain:2001:AAM:375827.375845}
K.~Jain and V.~Vazirani.
\newblock Approximation algorithms for metric facility location and k-median
  problems using the primal-dual schema and lagrangian relaxation.
\newblock {\em J. ACM}, 48(2):274--296, Mar. 2001.

\bibitem{jegelkaBlimes2011}
S.~Jegelka and J.~Bilmes.
\newblock Submodularity beyond submodular energies: Coupling edges in graph
  cuts.
\newblock In {\em Proceedings of the 24th {IEEE} Conference on Computer Vision
  and Pattern Recognition (CVPR-2011)}, pages 1897--1904, 2011.

\bibitem{pavVoting}
D.~Kilgour.
\newblock Approval balloting for multi-winner elections.
\newblock In J.~Laslier and R.~Sanver, editors, {\em Handbook on Approval
  Voting}, pages 105--124. Springer, 2010.

\bibitem{gunhee-iccv-11}
G.~Kim, E.~Xing, L.~Fei-Fei, and T.~Kanade.
\newblock Distributed cosegmentation via submodular optimization on anisotropic
  diffusion.
\newblock In {\em Proceedings of {IEEE} International Conference on Computer
  Vision (ICCV-2011)}, pages 169--176, 2011.

\bibitem{submodularOverview}
A.~Krause and D.~Golovin.
\newblock Submodular function maximization.
\newblock Technical report, 2012.

\bibitem{KrauseGuestrin:uai05infogain}
A.~Krause and C.~Guestrin.
\newblock Near-optimal value of information in graphical models.
\newblock In {\em Proceedings of Conference on Uncertainty in Artificial
  Intelligence (UAI-2005)}, July 2005.

\bibitem{Krause:2008:NSP:1390681.1390689}
A.~Krause, A.~Singh, and C.~Guestrin.
\newblock Near-optimal sensor placements in gaussian processes: Theory,
  efficient algorithms and empirical studies.
\newblock {\em Journal of Machine Learning Research}, 9:235--284, June 2008.

\bibitem{Kumar:2013:FGA:2486159.2486168}
R.~Kumar, B.~Moseley, S.~Vassilvitskii, and A.~Vattani.
\newblock Fast greedy algorithms in mapreduce and streaming.
\newblock In {\em Proceedings of the Twenty-fifth Annual ACM Symposium on
  Parallelism in Algorithms and Architectures (SPAA-2013)}, pages 1--10, 2013.

\bibitem{Lee:2009:NSM:1536414.1536459}
J.~Lee, V.~Mirrokni, V.~Nagarajan, and M.~Sviridenko.
\newblock Non-monotone submodular maximization under matroid and knapsack
  constraints.
\newblock In {\em Proceedings of the Forty-first Annual ACM Symposium on Theory
  of Computing (STOC-2009)}, pages 323--332, 2009.

\bibitem{conf/interspeech/LinB09}
H.~Lin and J.~Bilmes.
\newblock How to select a good training-data subset for transcription:
  submodular active selection for sequences.
\newblock In {\em Proceedings of 10th Annual Conference of thh International
  Speech Communication Association (INTERSPEECH-2009)}, pages 2859--2862, 2009.

\bibitem{Lin:2010:MSV:1857999.1858133}
H.~Lin and J.~Bilmes.
\newblock Multi-document summarization via budgeted maximization of submodular
  functions.
\newblock In {\em Proceedings of Human Language Technologies: The 2010 Annual
  Conference of the North American Chapter of the Association for Computational
  Linguistics (HTL-2010)}, pages 912--920, Stroudsburg, PA, USA, 2010.
  Association for Computational Linguistics.

\bibitem{budgetSocialChoice}
T.~Lu and C.~Boutilier.
\newblock Budgeted social choice: {F}rom consensus to personalized decision
  making.
\newblock In {\em Proceedings of the 22nd International Joint Conference on
  Artificial Intelligence (IJCAI-2011)}, pages 280--286, 2011.

\bibitem{monroeElection}
B.~Monroe.
\newblock Fully proportional representation.
\newblock {\em American Political Science Review}, 89(4):925--940, 1995.

\bibitem{NIPS2005_2760}
M.~Narasimhan, J.~Nebojsa, and B.~Jeff.
\newblock Q-clustering.
\newblock In Y.~Weiss, B.~Sch\"{o}lkopf, and J.~Platt, editors, {\em
  Proceedings of Advances in Neural Information Processing Systems 18
  (NIPS-2010)}, pages 979--986. MIT Press, 2006.

\bibitem{submodular}
G.~Nemhauser, L.~Wolsey, and M.~Fisher.
\newblock An analysis of approximations for maximizing submodular set
  functions.
\newblock {\em Mathematical Programming}, 14(1):265--294, December 1978.

\bibitem{nie:b:invitation-fpt}
R.~Niedermeier.
\newblock {\em Invitation to Fixed-Parameter Algorithms}.
\newblock Oxford University Press, 2006.

\bibitem{complexityProportionalRepr}
A.~Procaccia, J.~Rosenschein, and A.~Zohar.
\newblock On the complexity of achieving proportional representation.
\newblock {\em Social Choice and Welfare}, 30(3):353--362, April 2008.

\bibitem{NIPS2013_4988}
s.~Jegelka, F.~Bach, and S.~Sra.
\newblock Reflection methods for user-friendly submodular optimization.
\newblock In C.~Burges, L.~Bottou, M.~Welling, Z.~Ghahramani, and
  K.~Weinberger, editors, {\em Proceedings of Advances in Neural Information
  Processing Systems 26 (NIPS-2013)}, pages 1313--1321. Curran Associates,
  Inc., 2013.

\bibitem{conf/aaai/SkowronF15}
P.~Skowron and P.~Faliszewski.
\newblock Fully proportional representation with approval ballots:
  Approximating the {M}ax{C}over problem with bounded frequencies in {FPT}
  time.
\newblock In {\em Proceedings of the 29th Conference on Artificial Intelligence
  (AAAI-2015)}, pages 2124--2130, 2015.

\bibitem{owaWinner}
P.~Skowron, P.~Faliszewski, and J.~Lang.
\newblock Finding a collective set of items: From proportional
  multirepresentation to group recommendation.
\newblock {\em Artificial Intelligence}, 241:191--216, 2016.

\bibitem{sko-fal-sli:j:multiwinner}
P.~Skowron, P.~Faliszewski, and A.~Slinko.
\newblock Achieving fully proportional representation: {Approximability}
  result.
\newblock {\em Artificial Intelligence}, 222:67--103, 2015.

\bibitem{conf/nips/StreeterG08}
M.~Streeter and D.~Golovin.
\newblock An online algorithm for maximizing submodular functions.
\newblock In {\em Proceedings of Advances in Neural Information Processing
  Systems 21 (NIPS-2008)}, pages 1577--1584, 2008.

\bibitem{journals/orl/Sviridenko04}
M.~Sviridenko.
\newblock A note on maximizing a submodular set function subject to a knapsack
  constraint.
\newblock {\em Oper. Res. Lett.}, 32(1):41--43, 2004.

\bibitem{Sviridenko:2015}
M.~Sviridenko, J.~Vondr\'{a}k, and J.~Ward.
\newblock Optimal approximation for submodular and supermodular optimization
  with bounded curvature.
\newblock In {\em Proceedings of the 26th ACM-SIAM Symposium on Discrete
  Algorithms (SODA-2015)}, pages 1134--1148, 2015.

\bibitem{Vazirani:2001:AA:500776}
V.~Vazirani.
\newblock {\em Approximation algorithms}.
\newblock Springer-Verlag New York, Inc., New York, NY, USA, 2001.

\bibitem{Vondrak:2011:SFM:1993636.1993740}
J.~Vondr\'{a}k, C.~Chekuri, and R.~Zenklusen.
\newblock Submodular function maximization via the multilinear relaxation and
  contention resolution schemes.
\newblock In {\em Proceedings of the Forty-third Annual ACM Symposium on Theory
  of Computing (STOC-2011)}, pages 783--792, 2011.

\bibitem{conf/nips/YueG11}
Y.~Yue and C.~Guestrin.
\newblock Linear submodular bandits and their application to diversified
  retrieval.
\newblock In {\em Proceedings of Advances in Neural Information Processing
  Systems 24 (NIPS-2011)}, pages 2483--2491, 2011.

\end{thebibliography}

\end{document}